\documentclass[10pt,journal,english,twocolumn]{IEEEtran}

\pdfoutput=1

\usepackage{babel}
\usepackage[babel]{microtype}
\usepackage[babel]{csquotes}

\usepackage[svgnames]{xcolor}
\usepackage{graphicx}

\usepackage{amsmath}
\usepackage{amsfonts}
\usepackage{amssymb}
\usepackage{amsthm}
\usepackage{algorithm,algpseudocode}
\usepackage{bm}
\usepackage{siunitx}
\sisetup{
  per-mode=repeated-symbol,
  mode=match,
}
\usepackage{booktabs}
\usepackage{subfigure}
\usepackage{tabularx}
\usepackage{titlesec} %

\usepackage{tikz}
\usepackage{pgfplots}
\pgfplotsset{compat=newest}
\usetikzlibrary{positioning}
\usetikzlibrary{shapes.geometric, arrows, decorations.pathreplacing}
\usetikzlibrary{shapes.arrows,arrows.meta}
\usetikzlibrary{3d}
\usetikzlibrary{calc}

\usepackage[defernumbers=false,
backend=biber,
isbn=false,
url=false,
maxbibnames=999,
maxcitenames=10,
doi=false,
giveninits=true,
sortcites=true,
eprint=true,
hyperref=true,
style=ieee,
]{biblatex}
\bibliography{abbr-ref.bib}

\usepackage{hyperref}
\hypersetup{%
  pdftitle = {RISnet: A Domain-Knowledge Driven Neural Network Architecture for RIS Optimization with Mutual Coupling and Partial CSI},
  pdfauthor = {Bile Peng, Karl-Ludwig Besser, Shanpu Shen, Finn Siegismund-Poschmann, Ramprasad Raghunath, Daniel Mittleman, Vahid Jamali, Eduard A. Jorswieck},
  pdflang = {en-US},
  colorlinks,
  allcolors=.,
  urlcolor=MidnightBlue,
  bookmarksnumbered,
  bookmarksopen,
}

\usepackage[acronyms,nomain,xindy,nowarn]{glossaries}
\makeglossaries
\loadglsentries{acronyms.tex}
\setacronymstyle{long-short}
\glsdisablehyper

\addto\extrasenglish{%

}
\addto\extrasenglish{%

}

\newtheorem{theorem}{Theorem}

\theoremstyle{remark}
\newtheorem{remark}{Remark}

\theoremstyle{definition}
\newtheorem{definition}{Definition}

\newenvironment{change}{}{\ignorespacesafterend}
\newenvironment{changev2}{}{\ignorespacesafterend}

\IfPackageLoadedTF{titlesec}{%
	\ifCLASSOPTIONconference%
	\titlespacing{\section}{0pt}{1.5ex plus 1.5ex minus 0.5ex}{0.7ex plus 1ex minus 0ex} %
	\titlespacing{\subsection}{0pt}{1.5ex plus 1.5ex minus 0.5ex}{0.7ex plus .5ex minus 0ex} %
	\else
	\titlespacing{\section}{0pt}{3.0ex plus 1.5ex minus 1.5ex}{0.7ex plus 1ex minus 0ex} %
	\titlespacing{\subsection}{0pt}{3.5ex plus 1.5ex minus 1.5ex}{0.7ex plus .5ex minus 0ex} %
	\fi
	
	\def\thesubsubsectiondis{\arabic{subsubsection})}
	\def\theparagraphdis{\alph{paragraph})}
	\titleformat{\subsubsection}[runin]{\normalfont\normalsize\itshape}{\thesubsubsectiondis}{.5em}{}[:]
	\titlespacing*{\subsubsection}{\parindent}{0ex plus 0.1ex minus 0.1ex}{1ex}
	\titleformat{\paragraph}[runin]{\normalfont\normalsize\itshape}{\theparagraphdis}{.5em}{}[:]
	\titlespacing*{\paragraph}{2\parindent}{0ex plus 0.1ex minus 0.1ex}{1ex}
}{}

\newcommand{\e}{\mathrm{e}}
\newcommand{\jj}{\mathrm{j}}
\let\imag\jj
\DeclareMathOperator{\trace}{Tr}

\newcommand{\todo}[2][]{\ignorespaces
	\if\relax\detokenize{#1}\relax
	{\color{red}[TODO: #2]}%
	\else
	{\color{red}[TODO (#1): #2]}%
	\fi
}

\definecolor{plot0}{HTML}{004488}
\definecolor{plot1}{HTML}{DDAA33}
\definecolor{plot2}{HTML}{BB5566}
\definecolor{plot3}{HTML}{000000}
\definecolor{plot4}{HTML}{AAAAAA}

\title{RISnet: A Domain-Knowledge Driven Neural Network Architecture for RIS Optimization with Mutual Coupling and Partial CSI}
\author{
Bile Peng, \IEEEmembership{Senior Member,~IEEE}, 
Karl-Ludwig Besser, \IEEEmembership{Member,~IEEE}, 
Shanpu Shen, \IEEEmembership{Senior Member,~IEEE}, 
Finn Siegismund-Poschmann,
Ramprasad Raghunath, \IEEEmembership{Student Member,~IEEE}, 
Daniel Mittleman, \IEEEmembership{Fellow,~IEEE},\\
Vahid Jamali, \IEEEmembership{Senior Member,~IEEE}, and
Eduard A. Jorswieck, \IEEEmembership{Fellow,~IEEE}
\thanks{Parts of this work were presented at the 2023 IEEE Global Communications Conference (Globecom)~\cite{Peng2023risnet}.}
\thanks{%
B.~Peng, R.~Raghunath and E.~A.~Jorswieck are with Institute for Communications Technology, TU Braunschweig, Germany (e-mail: \{b.peng, r.raghunath, e.jorswieck\}@tu-braunschweig.de).
K.-L.~Besser was with the Department of Electrical and Computer Engineering, Princeton University,  Princeton, NJ 08544, USA, and is now with the Department of Electrical Engineering, Linköping University, Linköping, Sweden (email: {karl-ludwig.besser}@liu.se).
S.~Shen is with Department of Electrical Engineering and Electronics,
University of Liverpool, UK (email: Shanpu.Shen@liverpool.ac.uk).
F.~Siegismund-Poschmann is with Institute for Computer Science, Freie Universität Berlin, Germany (email: finn.siegismund-poschmann@fu-berlin.de).
V.~Jamali is with the Department of Electrical Engineering and Information Technology, TU Darmstadt, Germany (email: vahid.jamali@tu-darmstadt.de).
D.~Mittleman is with School of Engineering, Brown University, USA (email: daniel\_mittleman@brown.edu).
}
\thanks{The work of E.~Jorswieck, R.~Raghunath, and B.~Peng was supported partly by the Federal Ministry of Education and Research (BMBF), Germany, through the Program of Souverän, Digital, and Vernetzt Joint Project 6G-RIC under Grant 16KISK031.
The work of K.-L.~Besser was supported by the German Research Foundation (DFG) under grant BE\,8098/1-1.
The work of V.~Jamali is supported in part by the DFG within the Collaborative Research Center MAKI (SFB~1053, Project-ID~210487104) and in part by the LOEWE initiative (Hesse, Germany) within the emergenCITY center [LOEWE/1/12/519/03/05.001(0016)/72].
The work of D.~Mittleman is supported by the US National Science Foundation (CNS-1954780, CNS-2211616), the US Air Force Office of Scientific Research (FA9550-22–1-0412), the Alexander von Humboldt Foundation, 
and the DFG through a Mercator Fellowship.
}
}

\begin{document}

\maketitle

\begin{abstract}\noindent\boldmath
\Gls{sdma} plays an important role in modern wireless communications.
Its performance depends on the channel properties,
which can be improved by \glspl{ris}.
In this work,
we jointly optimize \gls{sdma} precoding at the \gls{bs} and \gls{ris} configuration. 
We tackle difficulties of
mutual coupling between \gls{ris} elements,
scalability to more than 1000 \gls{ris} elements,
and high requirement for channel estimation.
We first derive an \gls{ris}-assisted channel model considering mutual coupling,
then propose an unsupervised \gls{ml} approach to optimize the \gls{ris} with
a dedicated \gls{nn} architecture \emph{RISnet},
which has good scalability, desired permutation-invariance,
and a low requirement for channel estimation.
Moreover,
we leverage existing high-performance analytical precoding scheme
to propose a hybrid solution of
\gls{ml}-enabled \gls{ris} configuration and analytical precoding at \gls{bs}.
More generally,
this work is an early contribution to combine \gls{ml} technique and domain knowledge in communication for \gls{nn} architecture design.
Compared to generic \gls{ml},
the problem-specific \gls{ml} can achieve higher performance, lower complexity and permutation-invariance.
\end{abstract}
\begin{IEEEkeywords}%
    Mutual coupling,
    Partial channel state information,
    Ray-tracing channel model,
    Reconfigurable intelligent surfaces,
    Space-division multiple access,
    Unsupervised machine learning.
\end{IEEEkeywords}

\glsresetall

\section{Introduction}
\label{sec:intro}

The \gls{sdma} technique plays an important role in modern multi-user wireless communications.
Its performance depends heavily on the channel condition.
For example,
a high channel gain realizes a high \gls{snr},
a high-rank \gls{mimo} channel matrix makes it possible to serve multiple users~\cite{wei2008iterative}.
In the past,
the wireless channel has been considered as given.
However, in recent years,
the \gls{ris}~\cite{liu2021reconfigurable} has been proposed to manipulate the channel property.
In this paper,
we consider the problem of joint optimization of precoding at the \gls{bs} and configuration of the \gls{ris}.

In the literature,
multiple precoding techniques have been proposed for \gls{sdma} at the \gls{bs},
including \gls{mrt}, \gls{zf}, \gls{mmse} precoding~\cite{joham2005linear},
and \gls{wmmse} precoding with proved equivalence to \gls{wsr} maximization~\cite{shi2011iteratively}.
The joint optimization of precoding and \gls{ris} configuration was performed with
 \gls{bcd}~\cite{guo2020weighted},
\gls{mm}~\cite{huang2018achievable,zhou2020intelligent} and
\gls{admm}~\cite{liu2021two} algorithms to maximize the \gls{wsr} in \gls{sdma}.
In addition,
\gls{rmcg} and Lagrangian method were applied to optimize multiple \glspl{ris} and \glspl{bs} to serve cell-edge users~\cite{li2020weighted}.
Successive refinement algorithm and exhaustive search were applied for passive beamforming improvement~\cite{wu2019beamforming}.
The active \glspl{ris} was optimized with the \gls{sca} algorithm to maximize the \gls{snr}~\cite{long2021active}.
The gradient-based optimization was applied to optimize the effective rank and the minimum singular value~\cite{Elmossallamy2021spatial}.
\begin{changev2}
A novel \gls{ao} scheme is proposed to minimize the transmit power subject to the data rate requirement~\cite{Lin2022}.
A multi-layer refracting \gls{ris} is proposed for \gls{swipt} to overcome severe large-scale fading~\cite{An2024}.
\end{changev2}

In general,
the above analytical iterative methods do not scale well with the number of \gls{ris} elements.
No more than 100 elements were assumed in \cite{guo2020weighted,huang2018achievable,liu2021two,li2020weighted,wu2019beamforming,zhou2020intelligent,long2021active}
and up to 400~\gls{ris} elements were assumed in \cite{Elmossallamy2021spatial},
which is far from the vision of more than 1000~\gls{ris} elements~\cite{di2020smart}
and the requirement in many scenarios to realize a necessary link budget~\cite{najafi2020physics}.
Another common important limitation of the analytical optimization approaches is that
suboptimal approximations were applied to make the problem 
solvable~\cite{guo2020weighted,wu2019beamforming,liu2021two}.
Moreover, the required numbers of iterations make the proposed iterative algorithms difficult to be implemented in real time
since the computation time is longer than the channel coherence time.

A noticeable effort is to apply \gls{ml} to optimize the \gls{ris},
which bypasses the difficulty of analytical solution via the universal approximation property of the \gls{nn}~\cite{hornik1989multilayer}.
Recently, \gls{dl} and \gls{rl} were applied and compared for \gls{ris} optimization~\cite{zhong2021ai}.
\Gls{lstm} and \gls{dqn} were applied to optimize \gls{ris} for \gls{noma}~\cite{gao2021machine}.
\Gls{rl} was applied to maximize the sum-rate in
\gls{sdma}~\cite{huang2020reconfigurable}, %
and
\gls{noma}~\cite{shehab2022deep,huang2020reconfigurable} and energy efficiency in \gls{noma}~\cite{guo2022energy}.
The achievable rate was predicted and the \gls{ris} was configured with \gls{dl}~\cite{sheen2021deep}.
The \gls{ris} was configured directly with received pilot signals~\cite{jiang2021learning}.
The mapping from received pilot signal to the phase shifts was optimized~\cite{ozdougan2020deep}.
Due to the separation of training and testing phases,
the trained \gls{ml} model was able to be applied in real time.
However, the scalability with the number of \gls{ris} elements was still limited~\cite{gao2021machine,shehab2022deep,yang2021reconfigurable,huang2020reconfigurable,jiang2021learning,ozdougan2020deep}.

The second common limitation of many works on \gls{ris} is the full \gls{csi} assumption (e.g., \cite{guo2020weighted,zhou2020intelligent,liu2021two,wu2019beamforming,long2021active}).
Due to the large number of elements, the full \gls{csi} of all elements is very difficult to obtain in real time.
Possible countermeasures are, e.g., codebook-based \gls{ris} optimization~\cite{najafi2020physics,an2022codebook}.
However, the beam training is still a major limitation.

The third common limitation of many works 
is the assumption of perfect \gls{ris} without mutual coupling.
Due to the small distance between two adjacent elements,
there might exist certain mutual coupling in the \gls{ris}.
In the literature,
a \gls{siso} \gls{ris}-assisted system was optimized~\cite{qian2021mutual}.
A mutual impedance-based communication model considering mutual coupling was presented~\cite{gradoni2021end}.
A mutual coupling aware characterization and performance analysis of the \gls{ris} was introduced~\cite{pettanice2023mutual}.
The \gls{ris} architecture was modeled and characterized using scattering parameter network analysis~\cite{shen2021modeling}.
A closed-form expression of \gls{ris}-assisted \gls{mimo} channel model in a general sense,
i.e., 
considering \gls{ris} mutual coupling,
with the direct channel from \gls{bs} to users and without the unilateral approximation~\cite{nerini2023universal},
is still an open problem.

Summarizing the state-of-the-art,
we identify three limitations of current \gls{ris} research: the insufficient consideration of mutual coupling,
the poor scalability to consider more than 1000 \gls{ris} elements
and the unrealistic assumption of full \gls{csi}.

In this work, 
we propose a dedicated \gls{nn} architecture \emph{RISnet} and an unsupervised \gls{ml} approach to address these limitations.
Our contribution is four-fold as follows.
\begin{itemize}
\item We derive an \gls{ris} channel model considering the mutual coupling between \gls{ris} elements.
The derivation is based on the scattering parameter network analysis~\cite{shen2021modeling}.
A closed-form expression is derived without unilateral approximation.
\item We propose an \gls{nn} architecture \emph{RISnet} for \gls{ris} configuration.
\begin{change}
The number of RISnet parameters is independent of the number of \gls{ris} elements,
enabling a high scalability
such that we can configure 1296~\gls{ris} elements within a few milliseconds.
Compared to it,
most conventional approaches in the literature
assume no more than 100~\gls{ris} elements,
as explained above.
\end{change}
Furthermore, the RISnet is \emph{permutation-invariant},
i.e., any permutation of users in the input has no impact on the \gls{ris} phase shifts
because permutation of users has no impact on the \gls{sdma} problem.
\item 
In addition to the scalability,
the \gls{csi} is extremely difficult to obtain due to the large number of \gls{ris} elements.
We propose an improved RISnet,
where only a few \gls{ris} elements (in our paper, 16 out of 1296) are equipped with RF chains and can estimate the channel with the pilot signals from users.
We demonstrate that RISnet can configure the phase shifts of all \gls{ris} elements with the partial \gls{csi} of a few \gls{ris} elements
if the channel is sparse,
which holds mostly in reality.
In this way, a good compromise between hardware complexity and performance is achieved.
\item 
We combine \gls{ml}-enabled \gls{ris} configuration and analytical precoding.
In this way, the performance of \gls{bs} precoding is guaranteed
because of the proven performance of analytical precodings,
the difficulty of training is reduced
because we do not need to optimize precoding.
\end{itemize}
We combine domain knowledge in communication and \gls{ml} for \gls{nn} architecture and training process design.
It not only solves the \gls{ris} configuration problem,
but can also inspire solutions to other problems.
According to~\cite{dash2022review},
incorporating domain knowledge is considered as one of the three grand challenges in \gls{ml}.
We show that an \gls{nn} architecture tailored for the considered problem
is superior in scalability, complexity, performance and robustness.

\begin{change}
This work is based on our preliminary result~\cite{Peng2023risnet}.
Compared to it,
we consider the mutual coupling between the \gls{ris} elements in this work.
The mutual coupling is unavoidable due to the short distance between \gls{ris} elements.
However,
it has not been considered in most literature
because the resulting optimization problem is too difficult.
In addition,
we maximize the \gls{wsr} in this work instead of sum-rate in \cite{Peng2023risnet},
where the weights are input of the RISnet.
In this way,
we significantly improve the flexibility to choose an operational point according to the requirement.
Furthermore,
we further propose a parallel implementation scheme with tensor operations,
which allows for very efficient training and inference (application)
in a GPU.
The permutation-invariance is strictly proved.
We also discuss possibilities to combine analytical precoding techniques with \gls{ml},
depending on the differentiability of the analytical method.
\end{change}

\emph{Notations:}
$(\cdot)^+$ denotes the pseudo-inverse operation,
$|a|$ and $\arg(a)$ are amplitude and phase of complex number $a$, respectively,
$\mathbf{T}[\cdot, j, k]$ denotes the vector of the elements with position $(j, k)$ in the second and third dimensions in the three-dimensional tensor $\mathbf{T}$,
$\mathbf{1}^{a\times a}$ is the matrix of all ones with shape $a\times a$,
and $\mathbf{I}^{a \times a}$ is the identity matrix with shape $a\times a$,
i.e., all diagonal elements of $\mathbf{I}^{a \times a}$ are ones,
all off-diagonal elements are zeros.
We also define $\mathbf{E}^{a \times a} = \mathbf{1}^{a\times a} - \mathbf{I}^{a \times a}$.
\section{The RIS Channel Model Considering Mutual Coupling}
\label{sec:channel}

We consider a \gls{ris}-aided multi-user \gls{miso} scenario,
as depicted in \autoref{fig:system_model}.
The channel from \gls{bs} to \gls{ris} is denoted as $\mathbf{H} \in \mathbb{C}^{N\times M}$,
where $N$ is the number of \gls{ris} elements
and $M$ is the number of \gls{bs} antennas.
The channel from \gls{ris} to users is denoted as $\mathbf{G} \in \mathbb{C}^{U\times N}$,
where $U$ is the number of users.
The direct channel from \gls{bs} to users directly is denoted as $\mathbf{D} \in \mathbb{C}^{U\times M}$.

\begin{figure}[htbp]
    \centering
    \resizebox{.5\linewidth}{!}{
    \begin{tikzpicture}
    \tikzstyle{base}=[isosceles triangle, draw, rotate=90, fill=gray!60, minimum size =.5cm]
	\tikzstyle{user}=[rectangle, draw, fill=gray!60, minimum size =.5cm, rounded corners=0.1cm]
	\tikzstyle{element}=[rectangle, fill=gray!30]
	
	\node[base,label={left:BS}] (BS) at (-3,0){};
	\draw[decoration={expanding waves,segment length=6},decorate] (BS) -- (-3,1.5);
	\node[user,label={below:User 1}] (UE1) at (4,2){};
	\node[user,label={below:User $U$}] (UE2) at (4,0){};
	\draw[step=0.33cm,thick] (-1,1.98) grid (0, 3);
	\node[label={[label distance=10]above:RIS}] (RIS) at (-0.5, 2.5) {};
	
	\node at ($(UE1)!.5!(UE2)$) {\vdots};
	\node[above right=.15 and .3 of RIS]{$\boldsymbol{\Phi}$};
	
	\draw[-to,shorten >=3pt] (BS) to node[above left, pos=.1, yshift=-.7cm] {$\mathbf{V}$} (UE1);
	\draw[-to,shorten >=3pt] (BS) to node[above,pos=.3, ] {$\mathbf{D}$} (UE2);
	
	\draw[-to] (BS) to node[above left, pos=.6] {$\mathbf{H}$} (RIS);
	
	\draw[-to,shorten >=3pt] (RIS) to node[left=-1mm, below=0mm, pos=.3] {$\mathbf{G}$} (UE1);
	\draw[-to,shorten >=3pt] (RIS) to node[below=-3mm, left=2mm] {} (UE2);
\end{tikzpicture}}
    \caption{The system model of \gls{ris}-assisted downlink multi-user broadcasting.}
    \label{fig:system_model}
\end{figure}
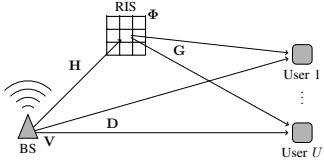

Denote the precoding matrix as $\mathbf{V} \in \mathbb{C}^{M\times U}$
and the signal processing matrix of the \gls{ris} as the diagonal matrix~${\boldsymbol{\Phi} \in \mathbb{C}^{N\times N}}$, 
where the diagonal element in row~$n$ and column~$n$ is $\phi_{nn}=\e^{\imag\psi_n}$, 
with $\psi_n \in [0, 2\pi)$ being the phase shift of \gls{ris} element~$n$. 
Conventionally,
the channel~$\mathbf{C}$ between \gls{bs} and users is~\cite{guo2020weighted}
\begin{equation}
    \mathbf{C} = \mathbf{D} + \mathbf{G} \boldsymbol{\Phi}\mathbf{H},
    \label{eq:reduced_channel}
\end{equation}
and the signal received at the users is
\begin{equation}
\mathbf{y} =  \mathbf{C}\mathbf{V} \mathbf{x} + \mathbf{n},
\label{eq:transmission_los}
\end{equation}
where $\mathbf{x} \in \mathbb{C}^{U \times 1}$ is the transmitted symbols,
$\mathbf{y} \in \mathbb{C}^{U \times 1}$ is the received symbols, and $\mathbf{n} \in \mathbb{C}^{U \times 1}$ is the noise. 

An implicit assumption of \eqref{eq:reduced_channel} is that the \gls{ris} elements do not have mutual coupling.
However,
this assumption might not hold
due to the small distance between \gls{ris} elements.
In the following,
we assume transmitter and receiver without mutual coupling and an \gls{ris} with mutual coupling.

Our derivation is a generalization of Section~III in~\cite{shen2021modeling}.
We define the S-parameter matrix~$\mathbf{S}$ for the signal transmission system shown in \autoref{fig:system_model} and partition it as
\begin{equation}
    \mathbf{S} = \left[
    \begin{array}{ccc}
        \mathbf{S}_{TT} & \mathbf{S}_{TI} & \mathbf{S}_{TR} \\
        \mathbf{S}_{IT} & \mathbf{S}_{II} & \mathbf{S}_{IR} \\
        \mathbf{S}_{RT} & \mathbf{S}_{RI} & \mathbf{S}_{RR} \\
    \end{array}
    \right],
      \label{eq:s_partition}
\end{equation}
where index~$T$ stands for transmitter,
$I$ stands for \gls{ris}
and $R$ stands for receiver.
The diagonal blocks~$\mathbf{S}_{TT}$, $\mathbf{S}_{II}$ and $\mathbf{S}_{RR}$ are the S-matrices of transmitter, \gls{ris} and receiver, respectively.
The off-diagonal blocks~$\mathbf{S}_{TI}$, $\mathbf{S}_{TR}$ and $\mathbf{S}_{IR}$ are the channels between transmitter and \gls{ris}, transmitter and receivers, and \gls{ris} and receivers, respectively,
i.e., the channels~$\mathbf{H}$, $\mathbf{D}$ and $\mathbf{G}$ in \autoref{fig:system_model}, respectively.
We also have the diagonal reflection coefficient matrix~$\boldsymbol{\Lambda}$,
which is defined as
\begin{equation}
    \boldsymbol{\Lambda}=\left[
    \begin{array}{ccc}
    \boldsymbol{\Lambda}_T & \mathbf{0} & \mathbf{0}\\
    \mathbf{0} & \boldsymbol{\Phi} & \mathbf{0}\\
    \mathbf{0} & \mathbf{0} & \boldsymbol{\Lambda}_R\\
    \end{array}
    \right],
    \label{eq:gamma_def}
\end{equation}
where $\boldsymbol{\Lambda}_T$, $\boldsymbol{\Phi}$ and $\boldsymbol{\Lambda}_R$ are the reflection coefficient matrices of transmitter, \gls{ris} and receiver, respectively.
Define $\mathbf{T}=\mathbf{S}(\mathbf{I} - \boldsymbol{\Lambda}\mathbf{S})^{-1}$ and partition $\mathbf{T}$ as
\begin{equation}
    \mathbf{T} = \left[
    \begin{array}{ccc}
        \mathbf{T}_{TT} & \mathbf{T}_{TI} & \mathbf{T}_{TR} \\
        \mathbf{T}_{IT} & \mathbf{T}_{II} & \mathbf{T}_{IR} \\
        \mathbf{T}_{RT} & \mathbf{T}_{RI} & \mathbf{T}_{RR} \\
    \end{array}
    \right]
    \label{eq:t_partition}
\end{equation}
in the same way as~\eqref{eq:s_partition}.
According to~\cite{shen2021modeling},
the channel matrix is given by
\begin{equation}
    \mathbf{C}=(\boldsymbol{\Lambda}_R + \mathbf{I})\mathbf{T}_{RT}(\mathbf{I} + \boldsymbol{\Lambda}_T\mathbf{T}_{TT} + \mathbf{T}_{TT})^{-1}.
    \label{eq:channel_general}
\end{equation}
Although~\eqref{eq:channel_general} is the most general form of the channel,
it is too complicated for the optimization due to the matrix multiplication and inversion.
In the following, we simplify~\eqref{eq:channel_general} 
assuming no mutual coupling at transmitter and receiver,
i.e., ${\mathbf{S}_{TT}=\mathbf{0}}$ and ${\mathbf{S}_{RR}=\mathbf{0}}$,
and mutual coupling at \gls{ris}, 
i.e., ${\mathbf{S}_{II}\neq \mathbf{0}}$,
the S-parameter matrix is
\begin{equation}
    \mathbf{S}=\left[
    \begin{array}{ccc}
        \mathbf{0} & \mathbf{S}_{TI} & \mathbf{S}_{TR} \\
        \mathbf{S}_{IT} & \mathbf{S}_{II} & \mathbf{S}_{IR} \\
        \mathbf{S}_{RT} & \mathbf{S}_{RI} & \mathbf{0} \\
    \end{array}
    \right].
    \label{eq:s}
\end{equation}

We further assume transmitter and receiver with perfect impedance matching, 
i.e., $\boldsymbol{\Lambda}_T=\mathbf{0}$ and $\boldsymbol{\Lambda}_R=\mathbf{0}$,
we have
\begin{equation}
    \boldsymbol{\Lambda}=\left[
    \begin{array}{ccc}
    \mathbf{0} & \mathbf{0} & \mathbf{0}\\
    \mathbf{0} & \boldsymbol{\Phi} & \mathbf{0}\\
    \mathbf{0} & \mathbf{0} & \mathbf{0}\\
    \end{array}
    \right].
    \label{eq:gamma}
\end{equation}

Combining~\eqref{eq:s} and~\eqref{eq:gamma}, we have
\begin{equation}
    \boldsymbol{\Lambda}\mathbf{S}=\left[
    \begin{array}{ccc}
        \mathbf{0} & \mathbf{0} & \mathbf{0} \\
        \boldsymbol{\Phi}\mathbf{S}_{IT} & \boldsymbol{\Phi}\mathbf{S}_{II} & \boldsymbol{\Phi}\mathbf{S}_{IR} \\
        \mathbf{0} & \mathbf{0} & \mathbf{0} \\
    \end{array}
    \right].
    \label{eq:gamma_times_s}
\end{equation}
Applying the Neumann series, we have
\begin{equation}
    (\mathbf{I} - \boldsymbol{\Lambda}\mathbf{S})^{-1}=\sum_{k=0}^\infty (\boldsymbol{\Lambda}\mathbf{S})^k.
    \label{eq:neumann}
\end{equation}
Combining \eqref{eq:gamma_times_s} and \eqref{eq:neumann}, we obtain the first line of \eqref{eq:sum_gamma_times_s_^k} at the top of the next page.
Applying the Neumann series again in the opposite direction, we obtain the second line of \eqref{eq:sum_gamma_times_s_^k}.
\begin{figure*}[ht]
\begin{equation}
\begin{aligned}
    (\mathbf{I} - \boldsymbol{\Lambda}\mathbf{S})^{-1}=\sum_{k=0}^\infty(\boldsymbol{\Lambda}\mathbf{S})^k&=\left[
    \begin{array}{ccc}
        \mathbf{I} & \mathbf{0} & \mathbf{0} \\
        \sum_{k=1}^\infty \left((\boldsymbol{\Phi}\mathbf{S}_{II})^{k-1}\right)\boldsymbol{\Phi}\mathbf{S}_{IT} &
        \sum_{k=0}^\infty(\boldsymbol{\Phi}\mathbf{S}_{II})^{k} & 
        \sum_{k=1}^\infty\left((\boldsymbol{\Phi}\mathbf{S}_{II})^{k-1}\right)\boldsymbol{\Phi}\mathbf{S}_{IR} \\
        \mathbf{0} & \mathbf{0} & \mathbf{I} \\
    \end{array}
    \right]\\
    &=\left[
    \begin{array}{ccc}
        \mathbf{I} & \mathbf{0} & \mathbf{0} \\
        (\mathbf{I} - \boldsymbol{\Phi}\mathbf{S}_{II})^{-1}\boldsymbol{\Phi}\mathbf{S}_{IT} & (\mathbf{I} - \boldsymbol{\Phi}\mathbf{S}_{II})^{-1} & (\mathbf{I} - \boldsymbol{\Phi}\mathbf{S}_{II})^{-1}\boldsymbol{\Phi}\mathbf{S}_{IR} \\
        \mathbf{0} & \mathbf{0} & \mathbf{I} \\
    \end{array}
    \right]
    \label{eq:sum_gamma_times_s_^k}
\end{aligned}
\end{equation}
\hrulefill
\end{figure*}
Combine~\eqref{eq:s}, \eqref{eq:t_partition}, and~\eqref{eq:sum_gamma_times_s_^k}, we have
\begin{equation}
    \mathbf{T}_{TT} = \mathbf{S}_{TI} (\mathbf{I} - \boldsymbol{\Phi}\mathbf{S}_{II})^{-1}\boldsymbol{\Phi}\mathbf{S}_{IT}
    \label{eq:TTT}
\end{equation}
and
\begin{equation}
    \mathbf{T}_{RT} = \mathbf{S}_{RT} + \mathbf{S}_{RI} (\mathbf{I} - \boldsymbol{\Phi}\mathbf{S}_{II})^{-1}\boldsymbol{\Phi}\mathbf{S}_{IT}.
\end{equation}
According to~\eqref{eq:channel_general}, the channel matrix is
\begin{multline}
    \mathbf{C} = (\mathbf{S}_{RT} + \mathbf{S}_{RI} (\mathbf{I} - \boldsymbol{\Phi}\mathbf{S}_{II})^{-1}\boldsymbol{\Phi}\mathbf{S}_{IT})\, \cdot \\
    (\mathbf{I} + \mathbf{S}_{TI} (\mathbf{I} - \boldsymbol{\Phi}\mathbf{S}_{II})^{-1}\boldsymbol{\Phi}\mathbf{S}_{IT})^{-1}
\end{multline}
where the term ${\mathbf{S}_{TI} (\mathbf{I} - \boldsymbol{\Phi}\mathbf{S}_{II})^{-1}\boldsymbol{\Phi}\mathbf{S}_{IT}}$ stands for the second order reflection from transmitter to \gls{ris} and back to transmitter,
which is negligible compared to $\mathbf{I}$ in most communication systems\footnote{
According to the physics of electromagnetic wave propagation,
i.e., the Friis transmission equation, 
only a small portion of the transmitted energy reaches the receiver 
due to the significant free-space path loss over distance. 
For instance, at a frequency of \SI{3.5}{\GHz} and a propagation distance of \SI{20}{\m}, 
the channel gain is approximately $(\lambda / (4\pi d))^2 = 1.2\times 10^{-7}$,
where $\lambda$ is the wavelength and
$d$ is the propagation distance.
The square of this gain, representing second-order reflections or higher-order effects, 
is approximately $1.44\times 10^{-14}$, 
diminishing to negligible levels. 
Consequently, the term $\mathbf{S}_{TI} (\mathbf{I} - \boldsymbol{\Phi}\mathbf{S}_{II})^{-1}\boldsymbol{\Phi}\mathbf{S}_{IT}$ 
is orders of magnitude smaller than $\mathbf{I}$ and can be ignored.
}.
If we ignore this term, we have
\begin{equation}
    \mathbf{C} = \mathbf{S}_{RT} + \mathbf{S}_{RI} (\mathbf{I} - \boldsymbol{\Phi}\mathbf{S}_{II})^{-1}\boldsymbol{\Phi}\mathbf{S}_{IT}.
    \label{eq:channel}
\end{equation}
Replacing $\mathbf{S}_{IT}$, $\mathbf{S}_{RT}$ and $\mathbf{S}_{RI}$ with more conventional $\mathbf{H}$, $\mathbf{D}$ and $\mathbf{G}$,
respectively,
in the channel model context,
we have
\begin{equation}
    \mathbf{C} = \mathbf{D} + \mathbf{G} (\mathbf{I} - \boldsymbol{\Phi}\mathbf{S}_{II})^{-1}\boldsymbol{\Phi}\mathbf{H}.
    \label{eq:channel_mutual_coupling}
\end{equation}
In particular, if the \gls{ris} does not have mutual coupling,
we have $\mathbf{S}_{II} = \mathbf{0}$ and~\eqref{eq:channel_mutual_coupling} is reduced to~\eqref{eq:reduced_channel}.
In this work, we apply the model from~\cite{gradoni2021end} to obtain the $\mathbf{S}_{II}$~matrix.

\begin{remark}
Our channel model~\eqref{eq:channel_mutual_coupling} is the same as the channel model~{(80)} in~\cite{nerini2023universal}.
However,
the derivation in~\cite{nerini2023universal} requires the unilateral assumption,
i.e.,
the channel from receiver to transmitter is ignored,
which contradicts the channel reciprocity,
i.e., channel gain from transmitter to receiver is equal to
channel gain from receiver to transmitter.
The reciprocity is an essential property of wireless channels.
We obtain the same channel model without the unilateral assumption,
making our derivation scientifically more rigorous.
\end{remark}

\begin{remark}
The matrix inverse in~\eqref{eq:channel_mutual_coupling} has a prohibitively high complexity.
Therefore,
we define $\mathbf{X}=\mathbf{G}(\mathbf{I}-\boldsymbol{\Phi}\mathbf{S}_{II})^{-1}$ and obtain $\mathbf{X}$ by solving the linear equation system
${\mathbf{X}(\mathbf{I}-\boldsymbol{\Phi}\mathbf{S}_{II})=\mathbf{G}}$.
The differentiable LU decomposition is applied to solve the equation system,
which has a significantly lower computation complexity.
The channel model~\eqref{eq:channel_mutual_coupling} becomes $\mathbf{C} = \mathbf{D} + \mathbf{X}\boldsymbol{\Phi}\mathbf{H}$.
\end{remark}

\section{Problem Formulation}
\label{sec:problem}

With the derived channel model from \autoref{sec:channel},
we can formulate the communication system optimization problems with \gls{sdma},
which uses the same resource block to serve multiple users.
Therefore, it realizes a higher spectrum efficiency
compared to earlier \gls{ma} techniques,
such as \gls{tdma}, \gls{fdma} and \gls{cdma}~\cite{mao2022rate}.
Moreover,
\gls{sdma} gains popularity due to the high spatial resolution of modern massive multi-antenna systems~\cite{joham2005linear}.
Therefore,
we choose \gls{sdma} a high-performance and future-oriented \gls{ma} technique.
In \gls{sdma},
the interference from other users is considered as noise and
the objective is to maximize the \gls{sinr}.
we define $\mathbf{L}=\mathbf{C}\mathbf{V}$
with $\mathbf{L} \in \mathbb{C}^{U\times U}$.
The \gls{sinr} of user~$u$ is computed as
$|l_{uu}|^2/(\sum_{v\neq u}|l_{uv}|^2+\sigma^2)$,
where $l_{uv}$ is the element in row~$u$ and column~$v$ of $\mathbf{L}$
and $\sigma^2$ is the noise power.
Following the canonical problem formulation of \gls{sdma}~\cite{shi2011iteratively},
we aim to maximize the \gls{wsr} of all users.
The problem is formulated as
\begin{subequations}
\begin{alignat}{2}%
    \max_{\mathbf{V}, \boldsymbol{\Phi}}\quad & %
  f=\sum_{u=1}^U w_u\log_2\left(1+\frac{|l_{uu}|^2}{\sum_{v\neq u}|l_{uv}|^2+\sigma^2}\right) \label{eq:sdma_objective}
\\
    \text{s.t.}\quad & \trace{\left(\mathbf{V}\mathbf{V}^H\right)} \leq E_{Tr}, \label{eq:sdma_transmit_power}\\
    & |\phi_{nn}|=1 \text{ for } n=1, \dots, N, \label{eq:sdma_passive_ris}\\
    & |\phi_{nn'}|=0 \text{ for } n, n' = 1, \dots, N \text{ and }  n \neq n', \label{eq:sdma_offdiagonal}
\end{alignat}
\label{eq:problem_sdma}%
\end{subequations}
\noindent where
$w_u$ in \eqref{eq:sdma_objective} is the weight of user~$u$,
\eqref{eq:sdma_transmit_power} states that the total transmit power cannot exceed the maximum transmit power $E_{Tr}$,
$\phi_{nn}$ in~\eqref{eq:sdma_passive_ris} is the diagonal element in row~$n$ and column~$n$ of $\boldsymbol{\Phi}$.
This constraint ensures that the \gls{ris} does not amplify the received signal (i.e., passive \gls{ris}).
Constraint~\eqref{eq:sdma_offdiagonal} enforces that $\boldsymbol{\Phi}$ is diagonal.
Note that $l_{uv}$ depends on both $\mathbf{V}$ and $\boldsymbol{\Phi}$.
Therefore, both $\mathbf{V}$ and $\boldsymbol{\Phi}$ are the optimization variables.

\begin{remark}
\label{re:rank}
The maximal number of served users is the rank of channel $\mathbf{C}$.
If the direct channel $\mathbf{D}$ is weak,
the rank of $\mathbf{C}$ depends mainly on the rank of 
$\mathbf{G}(\mathbf{I} - \boldsymbol{\Phi}\mathbf{S}_{II})^{-1}\boldsymbol{\Phi}\mathbf{H}$.
Since the rank of the product of matrices is smaller than or equal to the lowest rank of the factors,
the rank of $\mathbf{C}$ depends strongly on ranks of $\mathbf{G}$ and $\mathbf{H}$.
If they are rank-deficient,
it is impossible to serve as many users as the \gls{bs} antenna numbers.
\end{remark}

In the following sections,
we propose the unsupervised \gls{ml} approach to solve Problem~\eqref{eq:problem_sdma}.

\begin{remark}
In \autoref{sec:risnet},
two approaches with full \gls{csi} and partial \gls{csi}
are proposed.
They share the same problem formulation~\eqref{eq:problem_sdma}.
The only difference is the available information:
With full \gls{csi} as the algorithm input,
the objective~\eqref{eq:sdma_objective} can be computed in a deterministic way
and the problem is easier to solve.
However, the full \gls{csi} is difficult to obtain.
With partial \gls{csi},
the information is incomplete to calculate the objective,
because the \gls{csi} computation requires the full \gls{csi},
and the problem is more difficult to solve.
However, the partial \gls{csi} is easier to obtain 
given limited resource for channel estimation,
as will be explained in more detail in \autoref{sec:risnet_partial}.
\end{remark}
\section{Unsupervised Machine Learning with RISnet}
\label{sec:risnet}

\subsection{The Framework of Unsupervised ML for Optimization}
\label{sec:framework}

We first present the framework of unsupervised \gls{ml} for optimization.
Given a problem representation~$\boldsymbol{\Gamma}$ (in our case, \gls{csi} and user weights~$w_u$ in~\eqref{eq:sdma_objective}),
we look for a solution~$\boldsymbol{\Phi}$ (the \gls{ris} phase shifts) that maximizes objective~$f$ in~\eqref{eq:sdma_objective},
which is fully determined by $\boldsymbol{\Gamma}$ and $\boldsymbol{\Phi}$, and it can be written as
$f(\boldsymbol{\Gamma}, \boldsymbol{\Phi})$.
We define an \gls{nn}~$N_\theta$, which is parameterized by~$\theta$ (i.e., $\theta$ contains all trainable weights and biases in~$N_\theta$) and
maps from~$\boldsymbol{\Gamma}$
to~$\boldsymbol{\Phi}$,
i.e., $\boldsymbol{\Phi} = N_\theta(\boldsymbol{\Gamma})$.
We write the objective as ${f(\boldsymbol{\Gamma}, \boldsymbol{\Phi})=f(\boldsymbol{\Gamma}, N_\theta(\boldsymbol{\Gamma}); \theta)}$.
Note that it is emphasized that~$f$ depends on~$\theta$.
We then collect massive data of $\boldsymbol{\Gamma}$ in a training set~$\mathcal{D}$
and formulate the problem as
\begin{equation}
    \max_\theta K=\sum_{\boldsymbol{\Gamma} \in \mathcal{D}} f(\boldsymbol{\Gamma}, N_\theta(\boldsymbol{\Gamma}); \theta).
    \label{eq:ml}
\end{equation}
In this way, $N_\theta$ is optimized for the emsemble of $\boldsymbol{\Gamma} \in \mathcal{D}$
(\emph{training}) using gradient ascent:
\begin{equation}
    \theta \leftarrow \theta + \eta \nabla_\theta K,
    \label{eq:gradient-ascend}
\end{equation}
where $\eta$ is the learning rate.
If $N_\theta$ is well trained,
$\boldsymbol{\Phi}' = N_\theta (\mathbf{\Gamma'})$ is also a good solution for $\boldsymbol{\Gamma}' \notin \mathcal{D}$ (\emph{testing}),
like a human uses experience to solve new problems of the same type\footnote{A complete retraining is only required when the input states are fundamentally changed, e.g., change of deployment environment.}~\cite{yu2022role}.

Although~\eqref{eq:ml} is a general approach,
it would benefit from the problem-specific domain knowledge.
In the following sections,
we first define the features.
Next, we propose the RISnet architecture, and
finally, we present the joint optimization of \gls{bs} precoding and \gls{ris} configuration.
\subsection{Channel Estimation and Feature Definition}
\label{sec:feature_definition}

To begin with the \gls{ml} approach,
we first define the features as input of RISnet.
As depicted in \autoref{fig:system_model},
there are three channel matrices~$\mathbf{H}$, $\mathbf{G}$ and $\mathbf{D}$,
among which
$\mathbf{H}$ is assumed to be constant because \gls{bs} and \gls{ris} are stationary and the environment is relatively invariant,
$\mathbf{G}$ and $\mathbf{D}$ depend on the user positions.
Therefore, they need to be estimated and used as input of $N_\theta$. 
To estimate channel matrix~$\mathbf{G}$,
user~$u$ transmits a pilot signal~$\rho_u$,
which is known to the \gls{ris}.
The received pilot signal is $v_{un} = g_{un} \rho + t$,
where
$g_{un}$ is the channel gain between user~$u$ and \gls{ris} element~$n$,
and $t$ is the thermal noise.
The estimated value of $g_{un}$ is therefore $v_{un} / \rho_u$.
Note that we first assume that every \gls{ris} element has the ability to estimate the channels in
\autoref{sec:pi_risnet}.
This assumption is good for the optimization but
requires expensive hardware.
In \autoref{sec:risnet_partial},
we assume that only a few \gls{ris} elements
are equipped with hardware for channel estimation
with the pilot signals from users,
which requires significantly less complicated hardware,
but sets a more difficult challenge for the optimization.
The estimation of the channel matrix~$\mathbf{D}$ is less challenging 
since the \gls{bs} antennas are significantly less than the \gls{ris} elements.
We can use a channel estimation method described in~\cite{guo2020weighted}
to estimate~$\mathbf{D}$.
With the estimated channel matrices,
we would like to define a feature~$\boldsymbol{\gamma}_{un}$ for user~$u$ and \gls{ris} element~$n$.
Since $g_{un}$ in row~$u$ and column~$n$ of $\mathbf{G}$ is the channel gain from \gls{ris} element~$n$ to user~$u$,
we can simply include amplitude and phase of $g_{un}$ in $\boldsymbol{\gamma}_{un}$\footnote{The \gls{nn} does not take complex numbers as input.}.
On the other hand, elements in $\mathbf{D}$ cannot be mapped to \gls{ris} elements because $\mathbf{D}$ is the channel from the \gls{bs} directly to the users.
Therefore, we define $\mathbf{J}=\mathbf{D}\mathbf{H}^+$,
and \eqref{eq:channel_mutual_coupling} becomes
$\mathbf{y} = \left(\mathbf{G} (\mathbf{I}-\boldsymbol{\Phi}\mathbf{S}_{II})^{-1} \boldsymbol{\Phi} + \mathbf{J}\right) \mathbf{H} \mathbf{V} \mathbf{x} + \mathbf{n}$,
i.e.,
signal~$\mathbf{x}$ is precoded with $\mathbf{V}$, transmitted through channel~$\mathbf{H}$ to the \gls{ris},
and through channel~${\mathbf{G}(\mathbf{I}-\boldsymbol{\Phi}\mathbf{S}_{II})^{-1}\boldsymbol{\Phi} + \mathbf{J}}$
to the users. 
Element~$j_{un}$ of $\mathbf{J}$ can be interpreted as the channel gain from \gls{ris} element~$n$ to user~$u$.
The channel feature of user~$u$ and \gls{ris} element~$n$ can then be defined as
$\boldsymbol{\gamma}_{un} = (|g_{un}|, \arg(g_{un}), |j_{un}|, \arg(j_{un}))^T \in \mathbb{R}^{4 \times 1}$.
The complete feature of all \gls{ris} elements and users is the aggregation of $\boldsymbol{\gamma}_{un}$ for all $u$ and $n$
and user weights.
The concrete structure is described in the following \autoref{sec:risnet_architecture}.

\subsection{The RISnet Architecture}
\label{sec:risnet_architecture}

We design a specific \gls{nn} architecture for \gls{ris} configuration
according to our domain knowledge in wireless communication.
Observing~\eqref{eq:channel_mutual_coupling},
we notice that the optimal phase shift of every \gls{ris} element
depends on its own channel gain
and a common goal of the whole \gls{ris},
which should be shared among all \gls{ris} elements
to enable their cooperation.
Correspondingly,
we define a \emph{local feature} of every \gls{ris} element
and a common \emph{global feature} of all \gls{ris} elements.
The RISnet consists of $L$~layers.
In each layer,
we design information processing units to generate local features
and global features for every \gls{ris} element and user,
which are stacked as the input of the next layer,
such that a proper information flow can be created for a sophisticated decision on~$\boldsymbol{\Phi}$.
The idea is comparable to using convolutional layers to capture local pattern in computer vision
and using attention mechanism to model context in nature language processing,
but for a much more specific problem with many \gls{ris} elements serving multiple users together.

By using the same information processing units for all \gls{ris} elements,
the number of trainable parameters is independent of the number of \gls{ris} elements.
With this approach,
RISnet can configure more than 1000~elements with an adequate complexity,
allowing for low complexity in training and
high efficiency in inference (application).

Another important consideration
from domain knowledge in communication
is the \emph{permutation-invariance}.
From~\eqref{eq:sdma_objective},
we notice that a permutation of the users does not have an impact on the objective function in \gls{sdma}.
The optimal decision on $\boldsymbol{\Phi}$ should therefore be independent from the user order.
This property is called permutation-invariance.
We define a permutation matrix~$\mathbf{P} \in \{0, 1\}^{U \times U}$ to describe an arbitrary permutation,
where each row and each column has only one 1.
For example, let us define
\begin{equation*}
\mathbf{P}=
\begin{pmatrix}%
1 & 0 & 0\\
0 & 0 & 1\\
0 & 1 & 0\\
\end{pmatrix}.
\end{equation*}
$\mathbf{P} \mathbf{G}$ permutes the second and third row of $\mathbf{G}$ while the first row remains unchanged.

\begin{definition}
A neural network~$N_\theta$ is permutation-invariant if ${N_\theta(\mathbf{P}\boldsymbol{\Gamma}) = N_\theta(\boldsymbol{\Gamma})}$ for any permutation matrix~$\mathbf{P}$.
\end{definition}

It is desirable that RISnet is permutation-invariant for \eqref{eq:problem_sdma}
to reflect the nature of \gls{sdma}.

A third consideration for RISnet design is that even if RISnet has a good scalability for more than one thousand \gls{ris} elements,
the full \gls{csi} of all \gls{ris} elements is extremely difficult to acquire.
Therefore, it is very beneficial to have an input,
which is easier to acquire than the full \gls{csi}.

\subsubsection{RISnet Architecture with Full CSI}
\label{sec:pi_risnet}

In the following,
we present the RISnet architecture with the above-described high scalability \gls{wrt} \gls{ris} elements,
permutation-invariance
and low requirement for \gls{csi} input.
The RISnet has multiple layers.
Both input and output of a layer are three-dimensional tensors,
where
the first dimension is the feature,
the second dimension is the \gls{ris} element
and the third dimension is the user.
In the first layer,
the vector~$\mathbf{f}_{un,1}=\boldsymbol{\Gamma}[\cdot, n, u]$ is the feature of \gls{ris} element~$n$
and user~$u$,
defined as the concatenation of user weight~$w_u$ and channel feature~$\boldsymbol{\gamma}_{un}$ (defined in \autoref{sec:feature_definition}).
The input and output feature format is 
shown in \autoref{fig:info_processing}.

As described at the beginning of this section,
the decision on the optimal phase shift of every \gls{ris} element depends on
both the local feature of the current \gls{ris} element
and the global feature of the whole \gls{ris}.
Therefore,
for each \gls{ris} element and user,
we define 4 classes of information processing units:
\begin{itemize}
    \item \texttt{c}urrent user and \texttt{c}urrent \gls{ris} element (\texttt{cc}),
    \item \texttt{c}urrent user and \texttt{a}ll \gls{ris} elements (\texttt{ca}),
    \item \texttt{o}ther users and \texttt{c}urrent \gls{ris} element (\texttt{oc}),
    \item \texttt{o}ther users and \texttt{a}ll \gls{ris} elements (\texttt{oa}).
\end{itemize}
Denote the input feature of user~$u$ and \gls{ris} element~$n$ in layer~$i$ as $\mathbf{f}_{un,i}$,
the output feature of user~$u$ and \gls{ris} element~$n$ in layer~$i$ is calculated as
\begin{multline}
\mathbf{f}_{un, i + 1} =\\
\begin{pmatrix}
     \text{ReLU}(\mathbf{W}^{\texttt{cc}}_{i} \mathbf{f}_{un, i} + \mathbf{b}_i^{\texttt{cc}}) \\
     \left(\sum_{n'}\text{ReLU}(\mathbf{W}^{\texttt{ca}}_{i} \mathbf{f}_{un', i} + \mathbf{b}_i^{\texttt{ca}})\right) \big/ N\\
     \left(\sum_{u'\neq u}\text{ReLU}(\mathbf{W}^{\texttt{oc}}_{i} \mathbf{f}_{u'n, i} + \mathbf{b}_i^{\texttt{oc}})\right) \big/ (U-1)\\
     \left(\sum_{u'\neq u}\sum_{n'}\text{ReLU}(\mathbf{W}^{\texttt{oa}}_{i} \mathbf{f}_{u'n', i} + \mathbf{b}_i^{\texttt{oa}})\right) \big/ (N(U-1)) %
\end{pmatrix}
\label{eq:layer_processing}
\end{multline}
for $i < L$,
where 
${\mathbf{W}^{\texttt{cc}}_i \in \mathbb{R}^{Q_i \times P_i}}$ is a matrix with trainable weights of class~\texttt{cc} in layer~$i$ with the input feature dimension~$P_i$ in layer~$i$ (i.e., ${\mathbf{f}_{un, i} \in \mathbb{R}^{P_i \times 1}}$) and output feature dimension~$Q_i$ in layer~$i$ of class~\texttt{cc},
${\mathbf{b}^{\texttt{cc}}_i \in \mathbb{R}^{Q_i \times 1}}$ is trainable bias of class~\texttt{cc} in layer~$i$.
Similar definitions and same dimensions apply to classes~\texttt{ca}, \texttt{oc} and \texttt{oa}.

For class \texttt{cc} in layer~$i$
(the first line of \eqref{eq:layer_processing}),
the output feature of user~$u$ and \gls{ris} element~$n$ is computed by 
applying a conventional fully connected layer 
(a linear transform with weights $\mathbf{W}^{\texttt{cc}}_i$ and bias $\mathbf{b}^{\texttt{cc}}_i$ and the ReLU activation)
to input $\mathbf{f}_{un,i}$.
The difference to the conventional fully connected \gls{nn}
is that the information processing is applied to
feature of every \gls{ris} element and user individually,
instead of the whole input of all \gls{ris} elements and users.

For class \texttt{ca} in layer~$i$
(the second line of \eqref{eq:layer_processing}),
we first apply the conventional linear transform
and ReLU activation to $\mathbf{f}_{un',i}$ like class \texttt{cc},
where $n=1, \dots, N$,
then compute the mean value of all \gls{ris} elements.
Therefore, the output feature of class \texttt{ca} for user~$u$ and all \gls{ris} elements is the same.

For classes \texttt{oc} and \texttt{oa}
(the third and fourth lines of \eqref{eq:layer_processing}),
the output features are averaged over all elements and/or other users,
similar to class \texttt{ca}.

We can infer from the above description that $\mathbf{f}_{un, i + 1} \in \mathbb{R}^{4Q_i}$ for all $u$ and $n$
because the output feature comprises of four classes.
Therefore $P_{i + 1} = 4 Q_i$.
The whole output feature
$\mathbf{F}_{i + 1} \in \mathbb{R}^{4Q_i \times U \times N}$ is a three dimensional tensor,
where elements with index $u$ and $n$ in second and third dimensions are $\mathbf{f}_{un, i + 1}$.
Observe~\eqref{eq:layer_processing},
we note that the same information processing units are applied to all users and \gls{ris} elements.
Therefore,
the number of trainable parameters is independent from the number of \gls{ris} elements,
which enables a high scalability to configure more than 1000 \gls{ris} elements.
For the final layer,
we use one information processing unit.
Features of different users are summed up to be the phase shifts because all the users share the same phase shift.
Element~$\phi_{nn}$ in row~$n$ and column~$n$ of $\boldsymbol{\Phi}$ is defined as
$\phi_{nn}=\e^{\jj\varphi_n}$,
where $\varphi_n$ is the $n$-th output of RISnet.
Since $|\e^{\jj\varphi}|=1$ for any $\varphi$,
constraint~\eqref{eq:sdma_passive_ris} is satisfied.
Since all off-diagonal elements of $\boldsymbol{\Phi}$ are initialized as 0
and stay constant during the optimization,
constraint~\eqref{eq:sdma_offdiagonal} is satisfied.
The information processing of a layer is illustrated in \autoref{fig:info_processing}.

Another important merit of RISnet for \gls{sdma} is the permutation-invariance.
\begin{figure}%
    \centering
    \subfigure[First layer]{\resizebox{.9\linewidth}{!}{\begin{tikzpicture}
\makeatletter %
\tikzoption{canvas is xy plane at z}[]{%
	\def\tikz@plane@origin{\pgfpointxyz{0}{0}{#1}}%
	\def\tikz@plane@x{\pgfpointxyz{1}{0}{#1}}%
	\def\tikz@plane@y{\pgfpointxyz{0}{1}{#1}}%
	\tikz@canvas@is@plane
}
\makeatother

\NewDocumentCommand{\DrawCubes}{O {} m m m m m m}{%
	\def\XGridMin{#2}
	\def\XGridMax{#3}
	\def\YGridMin{#4}
	\def\YGridMax{#5}
	\def\ZGridMin{#6}
	\def\ZGridMax{#7}
	\begin{scope}[canvas is xy plane at z=\ZGridMax]
		\draw [#1] (\XGridMin,\YGridMin) grid (\XGridMax,\YGridMax);
	\end{scope}
	\begin{scope}[canvas is yz plane at x=\XGridMax]
		\draw [#1] (\YGridMin,\ZGridMin) grid (\YGridMax,\ZGridMax);
	\end{scope}
	\begin{scope}[canvas is xz plane at y=\YGridMax]
		\draw [#1] (\XGridMin,\ZGridMin) grid (\XGridMax,\ZGridMax);
	\end{scope}
}%

\DrawCubes [step=10mm,thick]{0}{4}{1}{2}{0}{4}
\node (width) [rectangle, yshift=-1.1cm, xshift=0.5cm,font=\LARGE] {RIS Element};
\node (height) [rectangle, yshift=0cm, xshift=-1.7cm, anchor=east,font=\LARGE] {Feature};
\node (channels) [rectangle, yshift=-.2cm, xshift=3.5cm,rotate=45,font=\LARGE] {User};
\draw [decorate,decoration={brace,amplitude=5pt,mirror}]
(-3.5,-2.7) -- (4.5,-2.7) node[midway, yshift=-.7cm, font=\LARGE]{Input tensor};

\node[draw,thick,rounded corners,rotate=90,inner sep=0em,minimum width=4.3cm,minimum height=1.5cm] (filters) at (6.25, 0.55) {};
\node[font=\LARGE] at (6.25, 2) {\texttt{cc}};
\node[font=\LARGE] at (6.25, 1) {\texttt{ca}};
\node[font=\LARGE] at (6.25, 0) {\texttt{oc}};
\node[font=\LARGE] at (6.25, -1) {\texttt{oa}};

\draw[-Latex,thick, line width=1pt,] (6.75, 2) -- ++(1, 0);
\draw[-Latex,thick, line width=1pt,] (6.75, 1) -- ++(1, 0);
\draw[-Latex,thick, line width=1pt,] (6.75, 0) -- ++(1, 0);
\draw[-Latex,thick, line width=1pt,] (6.75, -1) -- ++(1, 0);
\draw[-Latex,thick, line width=1pt,] (4.5, 0.5) -- ++(1, 0);
\draw [decorate,decoration={brace,amplitude=5pt,mirror}]
(4.75,-2.7) -- (7.75,-2.7) node[midway, yshift=-.7cm, font=\LARGE]{Info. proc.};

\DrawCubes [step=10mm,thick]{12.99}{17}{0}{4}{0}{4}
\node (width) [yshift=-2.1cm, xshift=13.5cm,font=\LARGE] {RIS Element};
\node (height) [rectangle, yshift=2cm, xshift=8.3cm, anchor=west,font=\LARGE] {Feature \texttt{cc}};
\node (height) [rectangle, yshift=1cm, xshift=8.3cm, anchor=west,font=\LARGE] {Feature \texttt{ca}};
\node (height) [rectangle, yshift=0cm, xshift=8.3cm, anchor=west,font=\LARGE] {Feature \texttt{oc}};
\node (height) [rectangle, yshift=-1cm, xshift=8.3cm, anchor=west,font=\LARGE] {Feature \texttt{oa}};
\node (channels) [rectangle, xshift=4cm, yshift=-1.2cm, xshift=12.5cm,rotate=45,font=\LARGE] {User};
\draw [decorate,decoration={brace,amplitude=5pt,mirror}]
(8,-2.7) -- (17.2,-2.7) node[midway, yshift=-.7cm, font=\LARGE]{Output tensor};
\end{tikzpicture}}}
    \subfigure[Intermediate layers]{\resizebox{.9\linewidth}{!}{\begin{tikzpicture}
\makeatletter %
\tikzoption{canvas is xy plane at z}[]{%
	\def\tikz@plane@origin{\pgfpointxyz{0}{0}{#1}}%
	\def\tikz@plane@x{\pgfpointxyz{1}{0}{#1}}%
	\def\tikz@plane@y{\pgfpointxyz{0}{1}{#1}}%
	\tikz@canvas@is@plane
}
\makeatother

\NewDocumentCommand{\DrawCubes}{O {} m m m m m m}{%
	\def\XGridMin{#2}
	\def\XGridMax{#3}
	\def\YGridMin{#4}
	\def\YGridMax{#5}
	\def\ZGridMin{#6}
	\def\ZGridMax{#7}
	\begin{scope}[canvas is xy plane at z=\ZGridMax]
		\draw [#1] (\XGridMin,\YGridMin) grid (\XGridMax,\YGridMax);
	\end{scope}
	\begin{scope}[canvas is yz plane at x=\XGridMax]
		\draw [#1] (\YGridMin,\ZGridMin) grid (\YGridMax,\ZGridMax);
	\end{scope}
	\begin{scope}[canvas is xz plane at y=\YGridMax]
		\draw [#1] (\XGridMin,\ZGridMin) grid (\XGridMax,\ZGridMax);
	\end{scope}
}%

    \DrawCubes [step=10mm,thin]{0}{4}{0}{4}{0}{4}
    \node (width) [rectangle, yshift=-2.1cm, xshift=0.5cm,font=\LARGE] {RIS Element};
    \node (height) [rectangle, yshift=2cm, xshift=-1.7cm, anchor=east,font=\LARGE] {Feature \texttt{cc}};
    \node (height) [rectangle, yshift=1cm, xshift=-1.7cm, anchor=east,font=\LARGE] {Feature \texttt{ca}};
    \node (height) [rectangle, yshift=0cm, xshift=-1.7cm, anchor=east,font=\LARGE] {Feature \texttt{oc}};
    \node (height) [rectangle, yshift=-1cm, xshift=-1.7cm, anchor=east,font=\LARGE] {Feature \texttt{oa}};
    \node (channels) [rectangle, yshift=-1.2cm, xshift=3.5cm,rotate=45,font=\LARGE] {User};

\draw [decorate,decoration={brace,amplitude=5pt,mirror}]
(-4.5,-2.7) -- (4.5,-2.7) node[midway, yshift=-.7cm, font=\LARGE]{Input tensor};

\node[draw,thick,rounded corners,rotate=90,inner sep=0em,minimum width=4.3cm,minimum height=1.5cm] (filters) at (6.25, 0.55) {};
\node[font=\LARGE] at (6.25, 2) {\texttt{cc}};
\node[font=\LARGE] at (6.25, 1) {\texttt{ca}};
\node[font=\LARGE] at (6.25, 0) {\texttt{oc}};
\node[font=\LARGE] at (6.25, -1) {\texttt{oa}};

\draw[-Latex,thick, line width=1pt,] (6.75, 2) -- ++(1, 0);
\draw[-Latex,thick, line width=1pt,] (6.75, 1) -- ++(1, 0);
\draw[-Latex,thick, line width=1pt,] (6.75, 0) -- ++(1, 0);
\draw[-Latex,thick, line width=1pt,] (6.75, -1) -- ++(1, 0);
\draw[-Latex,thick, line width=1pt,] (4.5, 0.5) -- ++(1, 0);
\draw [decorate,decoration={brace,amplitude=5pt,mirror}]
(4.75,-2.7) -- (7.75,-2.7) node[midway, yshift=-.7cm, font=\LARGE]{Info. proc.};

\DrawCubes [step=10mm,thick]{12.99}{17}{0}{4}{0}{4}
\node (width) [yshift=-2.1cm, xshift=13.5cm,font=\LARGE] {RIS Element};
\node (height) [rectangle, yshift=2cm, xshift=8.3cm, anchor=west,font=\LARGE] {Feature \texttt{cc}};
\node (height) [rectangle, yshift=1cm, xshift=8.3cm, anchor=west,font=\LARGE] {Feature \texttt{ca}};
\node (height) [rectangle, yshift=0cm, xshift=8.3cm, anchor=west,font=\LARGE] {Feature \texttt{oc}};
\node (height) [rectangle, yshift=-1cm, xshift=8.3cm, anchor=west,font=\LARGE] {Feature \texttt{oa}};
\node (channels) [rectangle, xshift=4cm, yshift=-1.2cm, xshift=12.5cm,rotate=45,font=\LARGE] {User};
\draw [decorate,decoration={brace,amplitude=5pt,mirror}]
(8,-2.7) -- (17.2,-2.7) node[midway, yshift=-.7cm, font=\LARGE]{Output tensor};
\end{tikzpicture}}}
    \subfigure[Final layer]{\resizebox{.9\linewidth}{!}{\begin{tikzpicture}
\makeatletter %
\tikzoption{canvas is xy plane at z}[]{%
	\def\tikz@plane@origin{\pgfpointxyz{0}{0}{#1}}%
	\def\tikz@plane@x{\pgfpointxyz{1}{0}{#1}}%
	\def\tikz@plane@y{\pgfpointxyz{0}{1}{#1}}%
	\tikz@canvas@is@plane
}
\makeatother

\NewDocumentCommand{\DrawCubes}{O {} m m m m m m}{%
	\def\XGridMin{#2}
	\def\XGridMax{#3}
	\def\YGridMin{#4}
	\def\YGridMax{#5}
	\def\ZGridMin{#6}
	\def\ZGridMax{#7}
	\begin{scope}[canvas is xy plane at z=\ZGridMax]
		\draw [#1] (\XGridMin,\YGridMin) grid (\XGridMax,\YGridMax);
	\end{scope}
	\begin{scope}[canvas is yz plane at x=\XGridMax]
		\draw [#1] (\YGridMin,\ZGridMin) grid (\YGridMax,\ZGridMax);
	\end{scope}
	\begin{scope}[canvas is xz plane at y=\YGridMax]
		\draw [#1] (\XGridMin,\ZGridMin) grid (\XGridMax,\ZGridMax);
	\end{scope}
}%

    \DrawCubes [step=10mm,thin]{0}{4}{0}{4}{0}{4}
    \node (width) [rectangle, yshift=-2.1cm, xshift=0.5cm,font=\LARGE] {RIS Element};
    \node (height) [rectangle, yshift=1.9cm, xshift=-1.7cm, anchor=east,font=\LARGE] {Feature \texttt{cc}};
    \node (height) [rectangle, yshift=0.9cm, xshift=-1.7cm, anchor=east,font=\LARGE] {Feature \texttt{ca}};
    \node (height) [rectangle, yshift=-0.1cm, xshift=-1.7cm, anchor=east,font=\LARGE] {Feature \texttt{oc}};
    \node (height) [rectangle, yshift=-1.1cm, xshift=-1.7cm, anchor=east,font=\LARGE] {Feature \texttt{oa}};
    \node (channels) [rectangle, yshift=-1.2cm, xshift=3.5cm,rotate=45,font=\LARGE] {User};

\draw [decorate,decoration={brace,amplitude=5pt,mirror}]
(-4.5,-2.7) -- (4.5,-2.7) node[midway, yshift=-.7cm, font=\LARGE]{Input tensor};

\node[font=\LARGE, align=center,anchor=west] at (5.25, 1) {Information\\Processing\\\&\\Summation};
\draw[black,rounded corners,thick] (5.2, -0.75) rectangle (8.5, 2.75);

\draw[-Latex,thick, line width=1pt,] (8.5, 1) -- ++(1, 0);
\draw[-Latex,thick, line width=1pt,] (4.2, 1) -- ++(1, 0);
\draw [decorate,decoration={brace,amplitude=5pt,mirror}]
(4.75,-2.7) -- (9,-2.7) node[midway, yshift=-.7cm, font=\LARGE]{Info. proc.};

\DrawCubes [step=10mm,thick]{12.99}{17}{2}{3}{3}{4}
\node (width) [yshift=-0cm, xshift=13.5cm,font=\LARGE] {RIS Element};
\node (height) [rectangle, yshift=1cm, xshift=9.5cm, anchor=west,font=\LARGE, align=left] {Phase\\Shifts};
\draw [decorate,decoration={brace,amplitude=5pt,mirror}]
(9.25,-2.7) -- (16.25,-2.7) node[midway, yshift=-.7cm, font=\LARGE]{Output tensor};
\end{tikzpicture}}}
    \caption{Information processing (info. proc.) in RISnet.
    The symmetric information processing along the dimension of users makes RISnet invariant to user permutation.}
    \label{fig:info_processing}
\end{figure}

\begin{theorem}
The RISnet is permutation-invariant.
\label{theorem:permutation_invariance}
\end{theorem}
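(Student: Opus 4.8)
The plan is to prove a stronger \emph{equivariance} property for the intermediate layers and then exploit the final user-summation to collapse it to invariance. First I would fix notation: a permutation matrix~$\mathbf{P}$ corresponds to a bijection~$\sigma$ on $\{1,\dots,U\}$, where the action $\mathbf{P}\boldsymbol{\Gamma}$ relabels the user dimension so that the feature occupying user slot~$u$ of $\mathbf{P}\boldsymbol{\Gamma}$ is the original feature of user~$\sigma(u)$ (consistent with the row-permutation example given for $\mathbf{P}\mathbf{G}$). Writing $\mathbf{f}_{un,i}$ for the features produced from $\boldsymbol{\Gamma}$ and $\tilde{\mathbf{f}}_{un,i}$ for those produced from $\mathbf{P}\boldsymbol{\Gamma}$, the base case is immediate from the feature definition in \autoref{sec:feature_definition}: since both the channel feature~$\boldsymbol{\gamma}_{un}$ and the weight~$w_u$ are attached to user~$u$, we have $\tilde{\mathbf{f}}_{un,1}=\mathbf{f}_{\sigma(u)n,1}$ for all $u,n$.

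The core claim I would establish by induction on $i<L$ is the equivariance relation $\tilde{\mathbf{f}}_{un,i}=\mathbf{f}_{\sigma(u)n,i}$ for all $u,n$. Assuming it at layer~$i$, I would verify it at layer~$i+1$ block by block in \eqref{eq:layer_processing}. For class~\texttt{cc} the map touches only user~$u$, so $\tilde{\mathbf{f}}^{\texttt{cc}}_{un,i+1}=\text{ReLU}(\mathbf{W}^{\texttt{cc}}_i\mathbf{f}_{\sigma(u)n,i}+\mathbf{b}^{\texttt{cc}}_i)=\mathbf{f}^{\texttt{cc}}_{\sigma(u)n,i+1}$; class~\texttt{ca} adds only an average over the element index~$n'$, which does not interact with the user relabeling, so the same conclusion follows.

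The only real subtlety lies in classes~\texttt{oc} and \texttt{oa}, where the sum runs over the \emph{other} users $u'\neq u$. Here I would use that $\sigma$ is a bijection, so as $u'$ ranges over $\{1,\dots,U\}\setminus\{u\}$, the image $\sigma(u')$ ranges exactly over $\{1,\dots,U\}\setminus\{\sigma(u)\}$. Substituting $\tilde{\mathbf{f}}_{u'n,i}=\mathbf{f}_{\sigma(u')n,i}$ and reindexing by $u''=\sigma(u')$ turns $\sum_{u'\neq u}$ into $\sum_{u''\neq\sigma(u)}$, yielding $\tilde{\mathbf{f}}^{\texttt{oc}}_{un,i+1}=\mathbf{f}^{\texttt{oc}}_{\sigma(u)n,i+1}$ and likewise for \texttt{oa}. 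Concatenating the four blocks gives $\tilde{\mathbf{f}}_{un,i+1}=\mathbf{f}_{\sigma(u)n,i+1}$, closing the induction. This reindexing of the sum over other users is the main obstacle, in the sense that it is the one place where the structural symmetry built into \eqref{eq:layer_processing} is genuinely used; the remaining cases are bookkeeping.

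Finally, I would handle the output layer, which applies a common per-user map~$g$ and then \emph{sums over all users} to form each phase~$\varphi_n$. Because summation over the full index set $\{1,\dots,U\}$ is insensitive to the bijection~$\sigma$, the equivariance at layer~$L$ collapses to equality: $\tilde{\varphi}_n=\sum_u g(\tilde{\mathbf{f}}_{un,L})=\sum_u g(\mathbf{f}_{\sigma(u)n,L})=\sum_{u''}g(\mathbf{f}_{u''n,L})=\varphi_n$ for every element~$n$. Since $\phi_{nn}=\e^{\jj\varphi_n}$ is determined by $\varphi_n$ alone, we conclude $N_\theta(\mathbf{P}\boldsymbol{\Gamma})=N_\theta(\boldsymbol{\Gamma})$, which is exactly the asserted permutation-invariance.
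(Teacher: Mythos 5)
Your proof is correct and follows essentially the same strategy as the paper's: establish permutation-equivariance of every layer before the last (handling the \texttt{cc}/\texttt{ca} blocks trivially and the \texttt{oc}/\texttt{oa} blocks via the symmetry of the sum over other users), then let the final summation over all users collapse equivariance to invariance. The only difference is presentational — you reindex the sums elementwise through the bijection~$\sigma$, whereas the paper's appendix works in tensor form and proves the same fact as the commutation $\mathbf{E}^{U\times U}\mathbf{P}=\mathbf{P}\mathbf{E}^{U\times U}$.
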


\begin{proof}
The permutation-invariance can be intuitively achieved in the following way:
Feature \texttt{cc} and Feature \texttt{ca} (current user and current/all \gls{ris} element(s))
depend only on the input feature of current user
and are independent from
user order.
Feature \texttt{oc} and Feature \texttt{oa} (other users and current/all \gls{ris} element(s))
depend only on the input feature of other users
and are also independent from user order.
Since the summation of features over all users
in the last layer of RISnet is permutation-invariant,
the RISnet is permutation-invariant.
A rigid proof is available in \autoref{appendix}.
\end{proof}

\begin{remark}
RISnet can be considered as a highly specialized \gls{gnn},
which performs inference on a graph with vertices and edges between vertices.
In our case,
a vertex is a combination of \gls{ris} element and user.
The class~\texttt{cc} is the local feature of the vertex.
Three sets of edges represent classes~\texttt{ca}, \texttt{oc} and~\texttt{oa},
where edges connect all \gls{ris} elements of the same user,
users of the same \gls{ris} element,
and all \gls{ris} elements and users,
respectively.
The messages passed from neighboring vertices
in classes~\texttt{ca}, \texttt{oc} and~\texttt{oa}
are described by lines two, three and four of~\eqref{eq:layer_processing},
respectively.
The aggregation of messages from neighboring vertices is the averaging operations in~\eqref{eq:layer_processing}.
The local feature and the global features are combined in the feature dimension as the
input for the next layer.
\end{remark}

\subsubsection{RISnet Architecture with Partial CSI}
\label{sec:risnet_partial}

Although the above-presented architecture has a high scalability due to the reuse of information processing units for all \gls{ris} elements and users,
the channel estimation is still difficult.
In particular,
the estimation of $\mathbf{G}$ is especially challenging due to
the large number of \gls{ris} elements
(unlike $\mathbf{D}$)
and high variance since it depends on the user location
(unlike $\mathbf{H}$).
If every \gls{ris} element has the ability to estimate \gls{csi}
from the pilot signal,
the hardware would be very complex.
However,
using the domain knowledge in channel modeling,
we notice that
if the propagation paths in the channel are mostly \gls{los} or specular,
i.e., the channel is \enquote{sparse},
then the \gls{csi} of a few \gls{ris} elements contains sufficient information about the user location,
which can be used to infer the full \gls{csi} of all \gls{ris} elements.
This fact suggests that we can use \emph{partial \gls{csi}} of $\mathbf{G}$ as input to RISnet.
The partial \gls{csi} is defined as channel gains between \gls{ris} and user
of a few selected \gls{ris} elements equipped with hardware for channel estimation,
i.e., a few selected columns of $\mathbf{G}$.
The partial \gls{csi} can be estimated with received pilot signals from users.
Since only a few \gls{ris} elements have hardware
for channel estimation,
the whole \gls{ris} has a low hardware complexity.
Instead of performing an explicit full \gls{csi} prediction like an image super-resolution,
we perform an implicit full \gls{csi} prediction,
i.e., an end-to-end learning from partial \gls{csi} to a complete \gls{ris} configuration.
In this section,
we propose an \gls{ris} architecture where only a few \gls{ris} elements are equipped with hardware to estimate the
\gls{csi} with pilot signals from the users,
as described in \autoref{sec:feature_definition}.
Compared to an \gls{ris} which can estimate full \gls{csi},
the proposed hardware architecture
has a significantly lower complexity due to the small number of \gls{ris} elements with channel estimation ability.
This hardware structure is similar to hybrid \gls{ris}~\cite{ju2024beamforming},
in which a few active elements with RF chains can amplify signals,
but is simpler for implementation
because the elements only estimate the channel rather than amplifying the signal,
which justifies the feasibility of the proposed hardware architecture.

\Gls{ris} elements with the hardware of channel estimation from received pilot signals of users
are defined as \emph{anchor elements}.
They are uniformly placed on the \gls{ris},
because we would like to make the partial \gls{csi} as representative to the full \gls{csi} as possible.
The geometric consideration will become clear in the following description.

The input channel features in this section consist of features of all users and anchor elements.
The RISnet expands from the anchor elements to all \gls{ris} elements.
A layer in RISnet that expands from anchor elements is called an \emph{expansion layer}.
The basic idea of the expansion layer is to apply the same information processing unit to an adjacent \gls{ris} element with the same relative position to the anchor element,
as shown in \autoref{fig:expansion_filter},
where element~5 is an anchor element.
The expansion layer computes features of all 9 elements with only feature of element~5
($\mathbf{f}_5$).
Therefore, each class, i.e., \texttt{cc}, \texttt{ca}, \texttt{oc}, \texttt{oa} for \gls{sdma} and \texttt{c}, \texttt{a} for \gls{noma}, has 9~information processing units,
which outputs features of the same \gls{ris} element and the adjacent 8~elements.

Concretely,
the output of \gls{ris} element~$n$ using information processing unit~$j$ is computed as
\begin{multline}
\mathbf{f}_{u\nu(n, j), i + 1} = \\
\begin{pmatrix}
     \text{ReLU}(\mathbf{W}^{\texttt{cc}}_{i,j} \mathbf{f}_{un, i} + \mathbf{b}_{i,j}^{\texttt{cc}}) \\
     \left(\sum_{n'}\text{ReLU}(\mathbf{W}^{\texttt{co}}_{i,j} \mathbf{f}_{un', i} + \mathbf{b}_{i,j}^{\texttt{co}})\right) \big/ N\\
     \left(\sum_{u'\neq u}\text{ReLU}(\mathbf{W}^{\texttt{oc}}_{i,j} \mathbf{f}_{u'n, i} + \mathbf{b}_{i,j}^{\texttt{oc}})\right) \big/ (U-1)\\
     \left(\sum_{u'\neq u}\sum_{n'}\text{ReLU}(\mathbf{W}^{\texttt{oa}}_{i,j} \mathbf{f}_{u'n', i} + \mathbf{b}_{i,j}^{\texttt{oa}})\right) \big/ (N(U-1)) %
\end{pmatrix},
\label{eq:expansion_layer_processing}
\end{multline}
where $\nu(n, j)$ is the \gls{ris} element index when applying information processing unit~$j$ for input of \gls{ris} element~$n$.
According to \autoref{fig:expansion_filter} and assuming that the \gls{ris} element index begins with~1 at the upper left corner,
increases first along the row and then changes to the next row 
(i.e., the index in row~$w$ and column~$h$ is ${h + (w - 1)\cdot H}$, with $H$ being the number of columns of the \gls{ris} array),
we have
\begin{equation}
    \nu(n, j)= 
\begin{cases}
n - H - 2 + j & j = 1, 2, 3,\\
n - 5 + j & j = 4, 5, 6,\\
n + H - 8 + j & j = 7, 8, 9.\\
\end{cases}
\label{eq:nu}
\end{equation}
In~\eqref{eq:expansion_layer_processing},
we use the four information processing units defined in~\eqref{eq:layer_processing}
to process feature of \gls{ris} element~$n$ and user~$u$.
Unlike in~\eqref{eq:layer_processing},
where the output is feature of \gls{ris} element~$n$ and user~$u$,
the output in~\eqref{eq:expansion_layer_processing}
is feature of \gls{ris} element~$\nu(n, j)$ and user~$u$.
For example,
for $j=2$,
the output feature is for the element above element~$n$
(see~\eqref{eq:nu}).
In this way,
we generate features of 9~\gls{ris} elements out of the feature of 1~element.

\begin{figure}%
    \centering
    \resizebox{.8\linewidth}{!}{    \begin{tikzpicture}
\tikzstyle{block} = [rectangle, rounded corners, text width=2cm, text centered, draw=black]
    
\node (feature) [block] {$\mathbf{f}_5$};
\node (filter1) [block, right of=feature, xshift=2cm, yshift=2.4cm] {Info. proc. 1};
\node (filter2) [block, below of=filter1, yshift=.4cm] {Info. proc. 2};
\node (filter3) [block, below of=filter2, yshift=.4cm] {Info. proc. 3};
\node (filter4) [block, below of=filter3, yshift=.4cm] {Info. proc. 4};
\node (filter5) [block, below of=filter4, yshift=.4cm] {Info. proc. 5};
\node (filter6) [block, below of=filter5, yshift=.4cm] {Info. proc. 6};
\node (filter7) [block, below of=filter6, yshift=.4cm] {Info. proc. 7};
\node (filter8) [block, below of=filter7, yshift=.4cm] {Info. proc. 8};
\node (filter9) [block, below of=filter8, yshift=.4cm] {Info. proc. 9};

\fill[blue!20!white, xshift=7cm, yshift=-1.5cm] (0, 0) rectangle (1, 1);
\fill[green!20!white, xshift=7cm, yshift=-1.5cm] (0, 1) rectangle (1, 2);
\fill[blue!20!white, xshift=7cm, yshift=-1.5cm] (0, 2) rectangle (1, 3);
\fill[green!20!white, xshift=7cm, yshift=-1.5cm] (1, 0) rectangle (2, 1);
\fill[blue!20!white, xshift=7cm, yshift=-1.5cm] (1, 1) rectangle (2, 2);
\fill[green!20!white, xshift=7cm, yshift=-1.5cm] (1, 2) rectangle (2, 3);
\fill[blue!20!white, xshift=7cm, yshift=-1.5cm] (2, 0) rectangle (3, 1);
\fill[green!20!white, xshift=7cm, yshift=-1.5cm] (2, 1) rectangle (3, 2);
\fill[blue!20!white, xshift=7cm, yshift=-1.5cm] (2, 2) rectangle (3, 3);

\draw[step=1cm, yshift=.5cm] (7,-2) grid (10, 1);

\draw[-to] (feature.east) -- (filter1.west);
\draw[-to] (feature.east) -- (filter2.west);
\draw[-to] (feature.east) -- (filter3.west);
\draw[-to] (feature.east) -- (filter4.west);
\draw[-to] (feature.east) -- (filter5.west);
\draw[-to] (feature.east) -- (filter6.west);
\draw[-to] (feature.east) -- (filter7.west);
\draw[-to] (feature.east) -- (filter8.west);
\draw[-to] (feature.east) -- (filter9.west);

\draw[-to] (filter1.east) -| (7.5, 1.2);
\draw[-to] (filter2.east) -| (8.5, 1.2);
\draw[-to] (filter3.east) -- (9.5, 1.2);
\draw[-to] (filter4.east) -| (7.5, 0.2);
\draw[-to] (filter5.east) -- (8.4, 0);
\draw[-to] (filter6.east) -| (9.5, -0.2);
\draw[-to] (filter7.east) -- (7.5, -1.2);
\draw[-to] (filter8.east) -| (8.5, -1.2);
\draw[-to] (filter9.east) -| (9.5, -1.2);

\node at (7.5, 1) {1};
\node at (8.5, 1) {2};
\node at (9.5, 1) {3};
\node at (7.5, 0) {4};
\node at (8.5, 0) {5};
\node at (9.5, 0) {6};
\node at (7.5, -1) {7};
\node at (8.5, -1) {8};
\node at (9.5, -1) {9};
\end{tikzpicture}}
    \caption{Application of 9 information processing units to expand from one anchor RIS element to 9 RIS elements,
    where $\mathbf{f}_5$ is the channel feature of \gls{ris} element 5.
    Information processing unit~5 is comparable to the information processing units of RISnet with full \gls{csi}.
    Indices of user and layer are omitted for simplicity since the expansion is for \gls{ris} elements.}
    \label{fig:expansion_filter}
\end{figure}
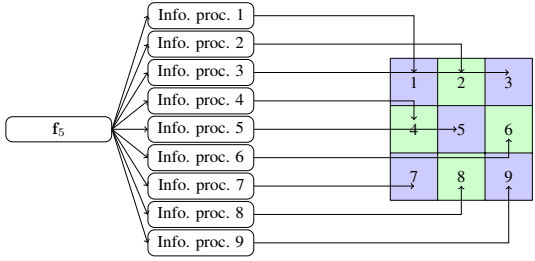

By defining two such expansion layers,
the numbers of anchor elements are increased by a factor of 9 in both row and column.
If we have 16 anchor elements (4$\times$4) with \gls{csi},
we can generate phase shifts of 1296 (36$\times$36) \gls{ris} elements. %
The process of expanding anchor elements is illustrated in \autoref{fig:expansion},
where the blue \gls{ris} elements in \autoref{fig:expansion-partial1} can estimate the channel from the pilot signals from the users.
Such \gls{ris} elements are only 1/81 of all \gls{ris} elements.
The whole RISnet architecture is shown in \autoref{fig:risnet_structure}.

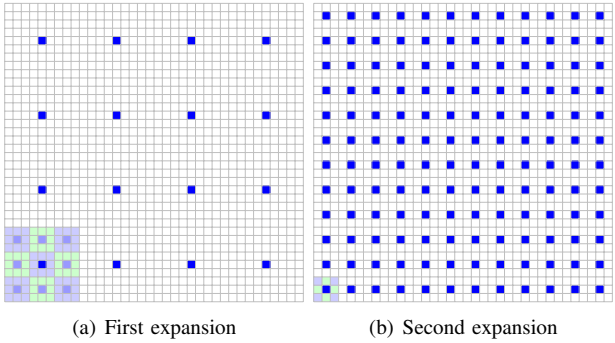
\begin{figure}
    \centering
    \subfigure[First expansion\label{fig:expansion-partial1}]{\resizebox{.45\linewidth}{!}{    \begin{tikzpicture}

\fill[blue!20!white] (0, 0) rectangle (3, 3);
\fill[green!20!white] (0, 3) rectangle (3, 6);
\fill[blue!20!white] (0, 6) rectangle (3, 9);
\fill[green!20!white] (3, 0) rectangle (6, 3);
\fill[blue!20!white] (3, 3) rectangle (6, 6);
\fill[green!20!white] (3, 6) rectangle (6, 9);
\fill[blue!20!white] (6, 0) rectangle (9, 3);
\fill[green!20!white] (6, 3) rectangle (9, 6);
\fill[blue!20!white] (6, 6) rectangle (9, 9);

\fill[blue!40!white] (1, 1) rectangle (2, 2);
\fill[blue!40!white] (4, 1) rectangle (5, 2);
\fill[blue!40!white] (7, 1) rectangle (8, 2);
\fill[blue!40!white] (1, 4) rectangle (2, 5);
\fill[blue!40!white] (7, 4) rectangle (8, 5);
\fill[blue!40!white] (1, 7) rectangle (2, 8);
\fill[blue!40!white] (4, 7) rectangle (5, 8);
\fill[blue!40!white] (7, 7) rectangle (8, 8);

\foreach \i in {4, 13, 22, 31}
\foreach \j in {4, 13, 22, 31}
{
\fill[blue] (\i, \j) rectangle (\i+1, \j+1);
}
\draw[step=1cm, gray!50!white] (0,0) grid (36, 36);
\end{tikzpicture}}}
    \subfigure[Second expansion\label{fig:expansion-partial2}]{\resizebox{.45\linewidth}{!}{    \begin{tikzpicture}

\fill[blue!20!white] (0, 0) rectangle (1, 1);
\fill[green!20!white] (0, 1) rectangle (1, 2);
\fill[blue!20!white] (0, 2) rectangle (1, 3);
\fill[green!20!white] (1, 0) rectangle (2, 1);
\fill[blue!20!white] (1, 1) rectangle (2, 2);
\fill[green!20!white] (1, 2) rectangle (2, 3);
\fill[blue!20!white] (2, 0) rectangle (3, 1);
\fill[green!20!white] (2, 1) rectangle (3, 2);
\fill[blue!20!white] (2, 2) rectangle (3, 3);

\foreach \i in {1, 4, 7, 10, 13, 16, 19, 22, 25, 28, 31, 34}
\foreach \j in {1, 4, 7, 10, 13, 16, 19, 22, 25, 28, 31, 34}
{
\fill[blue] (\i, \j) rectangle (\i+1, \j+1);
}

\foreach \i in {4, 13, 22, 31}
\foreach \j in {4, 13, 22, 31}
{
\fill[blue] (\i, \j) rectangle (\i+1, \j+1);
}
\draw[step=1cm, gray!50!white] (0,0) grid (36, 36);
\end{tikzpicture}}}
    \caption{Expansion of considered RIS elements. Blue: anchor RIS elements.
    Lower left corner: example of the expansion to extend the anchor RIS elements from the blue element to the adjacent elements (light blue elements in Subfigure~(a) and all elements in Subfigure~(b)).}
    \label{fig:expansion}
\end{figure}

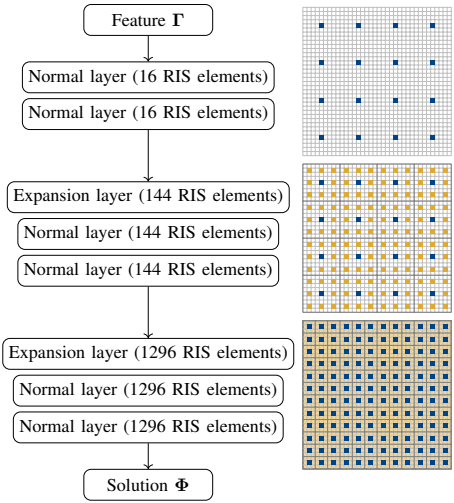
\begin{figure}%
    \centering
    \resizebox{!}{.27\textheight}{\tikzset{
	anchors1/.pic={
		\draw[step=1, lightgray] (0,0) grid (36, 36);
		\foreach \k in {0,...,3}
		\foreach \l in {0,...,3}
		{
			\fill[plot0] (9*\k+4, 9*\l+4) rectangle (9*\k+5, 9*\l+5);
		}
	},
	anchors2/.pic={
		\draw[step=1, lightgray] (0,0) grid (36, 36);
		\draw[step=9, gray] (0,0) grid (36, 36);
		\foreach \k in {0,...,3}
		\foreach \l in {0,...,3}
		{
		\foreach \i in {0,...,2}
		\foreach \j in {0,...,2}
		{
				\fill[plot1] (3*\i+1 + 9*\k, 3*\j+1 + 9*\l) rectangle (3*\i+2 + 9*\k, 3*\j + 2 + 9*\l);
		}
		\fill[plot0] (9*\k+4, 9*\l+4) rectangle (9*\k+5, 9*\l+5);
		}
	},
	anchors3/.pic={
		\draw[fill=plot1,fill opacity=.5] (0, 0) rectangle (36, 36);
		\draw[step=1, lightgray] (0,0) grid (36, 36);
		\draw[step=3, gray] (0,0) grid (36, 36);
		\foreach \k in {0,...,3}
		\foreach \l in {0,...,3}
		\foreach \i in {0,...,2}
		\foreach \j in {0,...,2}
		{
			\fill[plot0] (3*\i+1 + 9*\k, 3*\j+1 + 9*\l) rectangle (3*\i+2 + 9*\k, 3*\j + 2 + 9*\l);
		}
	}
}
\begin{tikzpicture}
\tikzstyle{layer} = [rectangle, rounded corners, minimum width=2.5cm, minimum height=.6cm, align=center, draw=black]

\node (input) [layer] {Feature $\boldsymbol{\Gamma}$};
\node (layer1) [layer,below=.5 of input] {Normal layer (16 RIS elements)};
\node (layer2) [layer, below=.1 of layer1] {Normal layer (16 RIS elements)};
\node (layer3) [layer, below=1 of layer2] {Expansion layer (144 RIS elements)};
\node (layer4) [layer, below=.1 of layer3] {Normal layer (144 RIS elements)};
\node (layer5) [layer, below=.1 of layer4] {Normal layer (144 RIS elements)};
\node (layer6) [layer, below=1 of layer5] {Expansion layer (1296 RIS elements)};
\node (layer7) [layer, below=.1 of layer6] {Normal layer (1296 RIS elements)};
\node (layer8) [layer, below=.1 of layer7] {Normal layer (1296 RIS elements)};
\node (output) [layer, below=.5 of layer8] {Solution $\boldsymbol{\Phi}$};

\pic[scale=.08] at ($(layer2)!.5!(layer3) + (3, 0)$) {anchors1};
\pic[scale=.08] at ($(layer5)!.5!(layer6) + (3, 0)$) {anchors2};
\pic[scale=.08] at ($(output.north) + (3, 0)$) {anchors3};

\draw[->] (input) -- (layer1);
\draw[->] (layer8) -- (output);
\draw[->] (layer2) -- (layer3);
\draw[->] (layer5) -- (layer6);
\end{tikzpicture}}
    \caption{The RISnet architecture with partial \gls{csi},
    where the information processing of normal layers is given by \eqref{eq:layer_processing} 
    and the information processing of expansion layers is given by \eqref{eq:expansion_layer_processing}. 
    Note that this process is only possible with uniformly placed anchor elements.}
    \label{fig:risnet_structure}
\end{figure}

\begin{remark}
The \gls{ris} elements with the channel estimation capability are fixed once the \gls{ris} hardware is designed.
Since the hardware design does not change constantly,
we assume known and fixed \gls{ris} elements with \gls{csi}.
If the \gls{ris} layout is modified,
the \gls{nn} architecture must be modified accordingly.
For example, if we want that one element is expanded to the adjacent 36~elements instead of 9,
we should define 36~information processing units (see \autoref{fig:expansion}).
This is an example of problem-specific \gls{ml}
that the \gls{nn} architecture is determined according to the hardware structure.
\end{remark}

During the model training,
the full \gls{csi} is still required to compute the objectives~\eqref{eq:sdma_objective}\footnote{The full \gls{csi} can be obtained, e.g., by off-line channel measurement before the operation.}.
However, phase shifts of all \gls{ris} elements~$\boldsymbol{\Phi}$ are computed with the partial \gls{csi} after the training in the application,
which implies that only the partial \gls{csi}
on the anchor elements is required in application.

\subsubsection{Efficient Parallel Implementation with Tensor Operations}

Although~\eqref{eq:layer_processing} and~\eqref{eq:expansion_layer_processing}
are the most intuitive way to understand the information flow in RISnet,
the computation is done per \gls{ris} element (and user),
which can only be implemented in a loop and has a low computation efficiency.
In order to utilize the parallel computing in a GPU,
it would be desirable to implement the information processing as tensor operations instead of computation per \gls{ris} element (and user).
Let $\mathbf{F}^{ca}_{i+1}$ be the output of information processing unit~\texttt{ca} in layer~$i$,
the second row of the right hand side of \eqref{eq:layer_processing} can be rewritten as
\begin{equation}
    \mathbf{F}^{\texttt{ca}}_{i+1}[\cdot, n, u]=\left(\sum_{n'}\text{ReLU}(\mathbf{W}^{\texttt{ca}}_{i} \mathbf{f}_{un', i} + \mathbf{b}_i^{\texttt{ca}})\right) / N.
    \label{eq:tensor_operation1}
\end{equation}
It is easy to prove that \eqref{eq:tensor_operation1}~is equivalent to
\begin{equation}
    \mathbf{F}^{\texttt{ca}}_{i+1}=\text{ReLU}(\mathbf{W}^{\texttt{ca}}_{i} \mathbf{F}_{i} + \mathbf{b}_i^{\texttt{ca}}) \cdot \mathbf{1}^{N\times N} / N,
    \label{eq:tensor_operation2}
\end{equation}
where the multiplication between $\mathbf{W}^{\texttt{ca}}_{i}$ and $\mathbf{F}_{i}$ is done in the first dimension (the feature dimension) of $\mathbf{F}_i$
and
the multiplication between $\text{ReLU}(\mathbf{W}^{\texttt{ca}}_{i} \mathbf{F}_{i} + \mathbf{b}_i^{\texttt{ca}})$ and $\mathbf{1}^{N\times N}$ takes place in the last two dimensions (users and \gls{ris} elements) of $\text{ReLU}(\mathbf{W}^{\texttt{ca}}_{i} \mathbf{F}_{i} + \mathbf{b}_i^{\texttt{ca}})$.
In this way,
\eqref{eq:tensor_operation1} is computed for all \gls{ris} elements and users without a loop using tensor operations.
Similarly,
the third row of the right hand side of \eqref{eq:layer_processing} can be rewritten as
\begin{equation}
    \mathbf{F}^{ \texttt{oc}}_{i+1}[\cdot, n, u] = \!\!
    \left(\sum_{u'\neq u}\text{ReLU}(\mathbf{W}^{\texttt{oc}}_{i} \mathbf{f}_{u'n, i} + \mathbf{b}_i^{\texttt{oc}})\right) \big/ (U-1).
    \label{eq:tensor_operation3}
\end{equation}
It is straightforward to prove that \eqref{eq:tensor_operation3} is equivalent to
\begin{equation}
    \mathbf{F}^{ \texttt{oc}}_{i+1} = \mathbf{E}^{U \times U} \cdot \text{ReLU}(\mathbf{W}^{\texttt{oc}}_{i} \mathbf{F}_{i} + \mathbf{b}_i^{\texttt{oc}}) \big/ (U-1).
    \label{eq:tensor_operation4}
\end{equation}
Other operations in~\eqref{eq:layer_processing}--\eqref{eq:expansion_layer_processing} can also be parallelized in similar ways.
Therefore, the RISnet can be implemented in a computation-efficient way.
\subsection{Joint Optimization of BS Precoding and RIS Configuration}
\label{sec:joint}

Problem~\eqref{eq:problem_sdma} involves joint optimization of \gls{bs} precoding and \gls{ris} configuration.
Using domain knowledge in communication,
we identify 
\gls{wmmse} precoding~\cite{shi2011iteratively} 
as a high-performance analytical precoding scheme.
We apply it
in order to guarantee precoding performance
and reduce training difficulty
since we do not need to optimize precoding.%

The \gls{wmmse} precoding~\cite{shi2011iteratively} is briefly elaborated as follows:
It is first proved that the \gls{wsr} maximization problem
is equivalent to a weighted sum \gls{mse} minimization problem.
Following this observation,
an iterative algorithm is proposed.
In each iteration,
the weight of each user's \gls{mse} is updated.
Subsequently,
the precoding matrix for each user is computed.
The iteration is terminated if the \gls{mse} weights converge.
In particular,
a factor~$\mu_k$ (equation~(18) in~\cite{shi2011iteratively})
is computed numerically in the precoding matrix computation
in order to meet the transmit power constraint~\eqref{eq:sdma_transmit_power}.
Therefore, the \gls{wmmse} precoder is indifferentiable
and cannot be part of the objective
because the \gls{nn} is trained with gradient ascent.
As a result, we apply \gls{ao},
where we fix the \gls{nn} and compute the precoding matrix~$\mathbf{V}$ using the \gls{ris} configuration generated by the current RISnet
for each data sample in $\mathcal{D}$,
then treat $\mathbf{V}$ as constants and train the \gls{nn} for the given precoding.

The framework of the hybrid approach of joint optimization is illustrated in \autoref{fig:hybrid}.

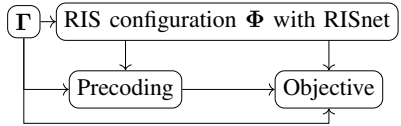
\begin{figure}[htbp]
    \centering
    \resizebox{.6\linewidth}{!}{\begin{tikzpicture}
\tikzstyle{block} = [draw=black, rounded corners];

\node (risnet) [block] {RIS configuration $\boldsymbol{\Phi}$ with RISnet};
\node (precoding) [block, below of = risnet, xshift = -1.5cm] {Precoding};
\node (feature) [block, left of = risnet, xshift=-2cm] {$\boldsymbol{\Gamma}$};
\node (objective) [block, right of = precoding, xshift = 2cm] {Objective}; 

\draw[->] (feature) -- (risnet);
\draw[->] (feature) |- (precoding);
\draw[->] (precoding) -- (objective);
\draw[<-] (precoding) -- ++(0, .72);
\draw[<-] (objective) -- ++(0, .72);
\draw[->] (feature) -- ++(0, -1.5) -- ++(4.5, 0) -- (objective);
\end{tikzpicture}}
    \caption{Hybrid solution of analytical precoding and ML-enabled RIS configuration.}
    \label{fig:hybrid}
\end{figure}

The training objective is~\eqref{eq:sdma_objective}.
Note that although constraints~\eqref{eq:sdma_transmit_power}--\eqref{eq:sdma_offdiagonal} do not appear in the objective,
\eqref{eq:sdma_transmit_power} is fulfilled by the \gls{wmmse} algorithm
(see equation~(18) in~\cite{shi2011iteratively}),
\eqref{eq:sdma_passive_ris} is satisfied because ${|\e^{\imag\varphi}|=1}$ for any ${\varphi \in \mathbb{R}}$,
\eqref{eq:sdma_offdiagonal} is guaranteed because all the off-diagonal elements stay constant throughout training and inference.

Summarizing the above descriptions,
the algorithms to train the \gls{nn} is formulated as \autoref{alg:sdma}.

\begin{algorithm}
\caption{Neural network training with \gls{ao}}
\label{alg:sdma}
\begin{algorithmic}[1]
\State Initialize the permutation-invariant RISnet $N_\theta$.
\Repeat
\State Randomly choose data samples in a batch.
\State Compute \gls{wmmse} precoding matrix according to \cite{shi2011iteratively} for every data sample,
where the precoding matrix is considered as constants for training.
\State Compute phase shifts $\boldsymbol{\Phi}$ with current $N_\theta$.
\State Compute objective \eqref{eq:sdma_objective} with \gls{csi}, user weights, precoding (considered as constants) and phase shifts.
\State Compute gradient of \eqref{eq:sdma_objective} \gls{wrt} $\theta$ with backward propagation.
\State Perform an optimization with the ADAM optimizer according to the gradient.
\Until{\Gls{wsr} stops increasing.}
\end{algorithmic}
\end{algorithm}

\section{Training and Testing Results}
\label{sec:results}

The training and testing results are presented in this section.
The open-source DeepMIMO data set~\cite{Alkhateeb2019} is applied to generate channel data.
The chosen urban scenario is shown in \autoref{fig:scenario},
where the \gls{los} channel from \gls{bs} to users are blocked by a building.
Only a weak direct channel is available through reflections on buildings and ground.
Furthermore, the channel from \gls{bs} to \gls{ris} has multiple \glspl{mpc} such that the \gls{mimo} channel matrix has a high rank to support multiple users in \gls{sdma}.
Finally,
we choose users at least 8m from each other to realize a full-rank channel matrix from \gls{ris} to users,
since the rank of the cascaded channel matrix is less than or equal to 
the lowest rank of the factors in the product,
see \autoref{re:rank}.
The user grouping is assumed given.
We note that it is an open topic to assign users according to channel/user positions.

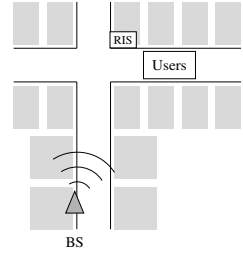
\begin{figure}%
    \centering
    \resizebox{.35\linewidth}{!}{		\begin{tikzpicture}[scale=1]
            \tikzstyle{base}=[isosceles triangle, draw, rotate=90, fill=gray!60, minimum size =0.12cm]
   
			\foreach \i in {1, 1.8, 3.4, 4.2, 5.0, 5.8}
			\foreach \j in {2, 4}
			{
				\fill[gray!30!white] (\i, \j) rectangle (\i+0.6, \j+1);
			}
   
			\foreach \i in {1.4, 3.4}
			\foreach \j in {-0.4, 0.8}
			{
				\fill[gray!30!white] (\i, \j) rectangle (\i+1, \j+1);
			}
			\node[align=center, rectangle,draw, minimum height=.65cm, text width=1cm] (ue) at (4.7,3.5) {Users};
			\draw (1,3.9) -- (2.5,3.9);
			\draw (1,3.1) -- (2.5,3.1);
			\draw (3.3,3.9) -- (6.4,3.9);
			\draw (3.3,3.1) -- (6.4,3.1);
			
			\draw (2.5,-0.4) -- (2.5,3.1);
			\draw (3.3,-0.4) -- (3.3,3.1);
			\draw (2.5,3.9) -- (2.5,5);
			\draw (3.3,3.9) -- (3.3,5);
			
            \node[base] (BS) at (2.45,0.1){};
			\node[below of=BS, yshift=.2cm] (bs) {BS};
            \draw[decoration=expanding waves,decorate] (BS) -- (3.2,1.9, 1.2);
			\node[rectangle, fill=white, draw, scale=.8] (ris) at (3.6,4.1)
			{  \hspace{.0cm}RIS\hspace{.0cm} };
	\end{tikzpicture}}
    \caption{The considered scenario: an intersection in an urban environment.}
    \label{fig:scenario}
\end{figure}

Important assumptions and parameter settings are listed in \autoref{tab:params}.
It is to note that the performance is sensitive to the learning rate.

\begin{table}[htbp]
    \centering
    \caption{Scenario and model parameters}
    \label{tab:params}
    \begin{tabularx}{\linewidth}{lX}
        \toprule
        Parameter & Value \\
        \midrule
        Number of \gls{bs} antennas & \num{9} \\
        \gls{ris} size & $36 \times 36$ elements\\
        Carrier frequency & $\SI{3.5}{\GHz}$\\
        Distance between adjacent antennas at \gls{bs} & 0.5 wavelength\\
        Distance between adjacent antennas at \gls{ris} & 0.25 wavelength\\
        Number of users & \num{4}\\
        Learning rate & $8 \times 10^{-4}$ -- $1.5\times 10^{-3}$\\
        Batch size & \num{512}\\
        Optimizer & ADAM\\
        Number of data samples in training set & \num{10240}\\
        Number of data samples in testing set & \num{1024}\\
        \bottomrule
    \end{tabularx}
\end{table}

As described in \autoref{sec:risnet_partial},
the effectiveness of \gls{ris} configuration with partial \gls{csi}
depends on the channel model:
the partial \gls{csi} contains sufficient information to configure the whole \gls{ris}
if the channel consist of a few \gls{los} or specular propagation paths.
In this case,
the channel gains at different \gls{ris} elements are
strongly correlated spatially
because all these channel gains are due to these specular propagation paths.
On the contrary,
if the channel has infinitely many and infinitely weak propagation paths due to scattering,
the channel gains at different \gls{ris} elements
are spatially \gls{iid}~\cite{jamali2022impact}.
We assume three channel models to assess the feasibility of applying partial \gls{csi} for \gls{ris} configuration:
\begin{itemize}
    \item Deterministic ray-tracing channel from DeepMIMO simulator, which is most feasible to infer the full \gls{csi} from the partial \gls{csi}.
    \item Deterministic ray-tracing channel plus \gls{iid} scattering gains on each \gls{ris} element.
    It is less feasible to infer the full \gls{csi} from the partial \gls{csi} with this model.
    \item \Gls{iid} channel model due to scattering of infinitely many infinitely weak propagation paths, 
    where the inference of full \gls{csi} from partial \gls{csi} is impossible.
\end{itemize}

\subsection{Training Behavior}

\autoref{fig:training_sdma_mutual_coupling} illustrates the improvement of \gls{wsr} in training and testing with full and partial \glspl{csi},
where the user weights are uniformly randomly generated and sum up to one,
and the data for testing is independently generated from the data for training.
It can be observed that training and testing with the same setup realize similar performances,
suggesting a good generalizability of the trained model.
From \autoref{fig:wsr_det},
we observe that similar performances are achieved with full and partial \gls{csi} when the channel is generated by the ray-tracing simulation
and is therefore deterministic,
suggesting that the partial \gls{csi} is sufficient and the difficulty of channel estimation can be relieved significantly.
With the deterministic channel and \gls{iid} scattering gain (\autoref{fig:wsr_semi}),
the realized \gls{wsr} with partial \gls{csi} is lower
than with full \gls{csi},
because the full \gls{csi} cannot be recovered from
the partial \gls{csi} due to the spatially \gls{iid} scattering gain.
However, the difference between full and partial \glspl{csi}
is insignificant,
suggesting that the proposed approach is robust against
\gls{iid} scattering gain.
If the channel is spatially fully \gls{iid}
(\autoref{fig:wsr_iid}),
the \gls{wsr} with full \gls{csi} is improved significantly,
whereas the \gls{wsr} with partial \gls{csi} stays almost constant.
This is because we cannot recover full \gls{csi} from partial \gls{csi} 
due to the independent channel gains at different \gls{ris} elements.

\begin{figure}[htbp]
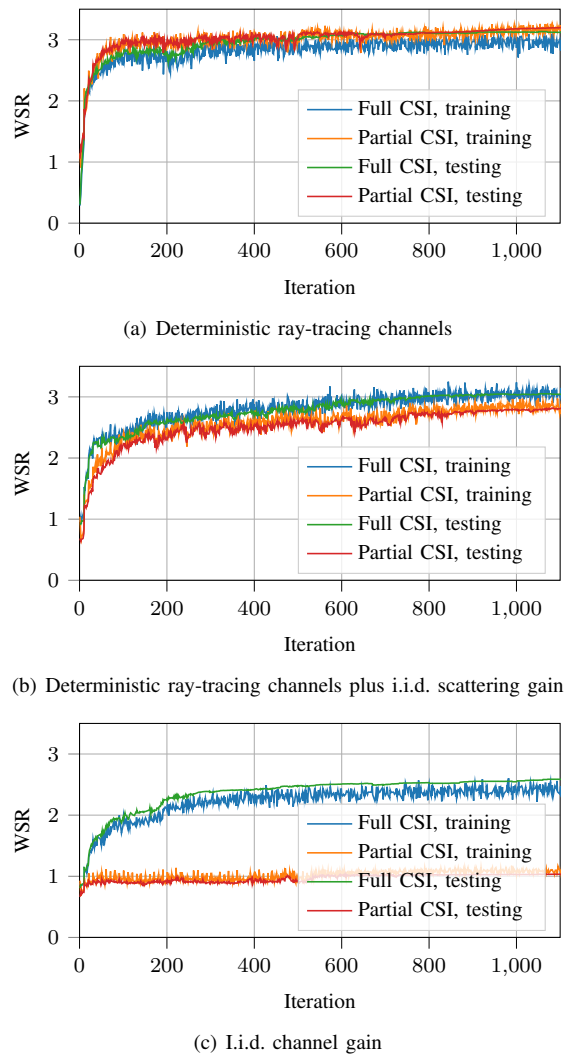

    \centering
    \subfigure[Deterministic ray-tracing channels\label{fig:wsr_det}]
    {\input{figs/sdma_0_mutual_coupling_rate}}
    \subfigure[Deterministic ray-tracing channels plus i.i.d. scattering gain\label{fig:wsr_semi}]
    {\input{figs/sdma_p_mutual_coupling_rate}}
    \subfigure[I.i.d. channel gain\label{fig:wsr_iid}]
    {\input{figs/sdma_iid_mutual_coupling_rate}}
    \caption{Realized \gls{wsr} with SDMA in training and testing.}
    \label{fig:training_sdma_mutual_coupling}
\end{figure}

According to~\cite{he2021wireless},
the wireless channel is \emph{sparse} in many typical scenarios,
i.e., the signal arrives the receiver via a few distinct \glspl{mpc}
and the spatially \gls{iid} scattering effect is very limited.
This fact is the foundation of many compressed sensing based channel estimation~\cite{haghighatshoar2017massive,wunder2019low}.
Since most real wireless channels are similar to deterministic channel model with or without \gls{iid} scattering gain 
(i.e., like \autoref{fig:wsr_det} and \autoref{fig:wsr_semi} rather than \autoref{fig:wsr_iid}),
the proposed method with partial \gls{csi} is believed to work well not only in simulation,
but also in reality.

\subsection{Comparison with Baselines}

Next, we compare our proposed approach with baselines.
Since \gls{ris} optimization considering mutual coupling for \gls{sdma} is still an open topic,
we assume an \gls{ris} without mutual coupling and use
\gls{drl}~\cite{huang2020reconfigurable},
random phase shift,
and \gls{bcd} algorithm~\cite{guo2020weighted}
as baselines for comparison.
The problem formulation is with full \gls{csi} and
without mutual coupling.
As a fair comparison with the same problem formulation,
we also train RISnet assuming no mutual coupling,
i.e., using~\eqref{eq:reduced_channel} as the channel model.
\autoref{fig:comparison_baselines} shows the performance comparison of the proposed approach and the baselines.
\begin{figure}[htbp]
    \centering
    \subfigure[Deterministic ray-tracing channels\label{fig:baselines_deteministic}]{
    \begin{tikzpicture}

\definecolor{darkgray176}{RGB}{176,176,176}
\definecolor{steelblue31119180}{RGB}{31,119,180}
\tikzstyle{every node}=[font=\footnotesize]

\begin{axis}[
height=.5\linewidth,
tick align=outside,
tick pos=left,
x grid style={darkgray176},
xlabel={WSR (bit/Hz/s)},
xmin=0, xmax=3.5,
xtick style={color=black},
y grid style={darkgray176},
ymin=-0.69, ymax=6.69,
ytick style={color=black},
ytick={0,1,2,3,4,5, 6, 7},
yticklabels={
  BCD~\cite{guo2020weighted},
  Random,
  DRL~\cite{huang2020reconfigurable},
  {RISnet, partial CSI, w. coupling},
  {RISnet, full CSI, w. coupling},
  {RISnet, partial CSI, w/o coupling},
  {RISnet, full CSI, w/o coupling},
}
]
\draw[draw=none,fill=steelblue31119180] (axis cs:0,-0.4) rectangle (axis cs:2.30,0.4);  %
\draw[draw=none,fill=steelblue31119180] (axis cs:0,0.6) rectangle (axis cs:0.382,1.4);  %
\draw[draw=none,fill=steelblue31119180] (axis cs:0,1.6) rectangle (axis cs:1.13,2.4);  %
\draw[draw=none,fill=steelblue31119180] (axis cs:0,2.6) rectangle (axis cs:3.11,3.4);  %
\draw[draw=none,fill=steelblue31119180] (axis cs:0,3.6) rectangle (axis cs:3.21,4.4);  %
\draw[draw=none,fill=steelblue31119180] (axis cs:0,4.6) rectangle (axis cs:3.01,5.4);  %
\draw[draw=none,fill=steelblue31119180] (axis cs:0,5.6) rectangle (axis cs:3.15,6.4);  %
\end{axis}

\end{tikzpicture}}
    \subfigure[Deterministic ray-tracing channels plus i.i.d. scattering gain\label{fig:baselines_semi}]{
    \begin{tikzpicture}

\definecolor{darkgray176}{RGB}{176,176,176}
\definecolor{steelblue31119180}{RGB}{31,119,180}
\tikzstyle{every node}=[font=\footnotesize]

\begin{axis}[
height=.5\linewidth,
tick align=outside,
tick pos=left,
x grid style={darkgray176},
xlabel={WSR (bit/Hz/s)},
xmin=0, xmax=3.5,
xtick style={color=black},
y grid style={darkgray176},
ymin=-0.69, ymax=6.69,
ytick style={color=black},
ytick={0,1,2,3,4,5, 6, 7},
yticklabels={
  BCD~\cite{guo2020weighted},
  Random,
  DRL~\cite{huang2020reconfigurable},
  {RISnet, partial CSI, w. coupling},
  {RISnet, full CSI, w. coupling},
  {RISnet, partial CSI, w/o coupling},
  {RISnet, full CSI, w/o coupling},
}
]
\draw[draw=none,fill=steelblue31119180] (axis cs:0,-0.4) rectangle (axis cs:2.27,0.4);  %
\draw[draw=none,fill=steelblue31119180] (axis cs:0,0.6) rectangle (axis cs:0.40,1.4);  %
\draw[draw=none,fill=steelblue31119180] (axis cs:0,1.6) rectangle (axis cs:1.10,2.4);  %
\draw[draw=none,fill=steelblue31119180] (axis cs:0,2.6) rectangle (axis cs:2.83,3.4);  %
\draw[draw=none,fill=steelblue31119180] (axis cs:0,3.6) rectangle (axis cs:3.03,4.4);  %
\draw[draw=none,fill=steelblue31119180] (axis cs:0,4.6) rectangle (axis cs:2.93,5.4);  %
\draw[draw=none,fill=steelblue31119180] (axis cs:0,5.6) rectangle (axis cs:3.06,6.4);  %
\end{axis}

\end{tikzpicture}}
    \subfigure[I.i.d. channel gain\label{fig:baselines_iid}]{
    \begin{tikzpicture}

\definecolor{darkgray176}{RGB}{176,176,176}
\definecolor{steelblue31119180}{RGB}{31,119,180}
\tikzstyle{every node}=[font=\footnotesize]

\begin{axis}[
height=.5\linewidth,
tick align=outside,
tick pos=left,
x grid style={darkgray176},
xlabel={WSR (bit/Hz/s)},
xmin=0, xmax=3.5,
xtick style={color=black},
y grid style={darkgray176},
ymin=-0.69, ymax=6.69,
ytick style={color=black},
ytick={0,1,2,3,4,5, 6, 7},
yticklabels={
  BCD~\cite{guo2020weighted},
  Random,
  DRL~\cite{huang2020reconfigurable},
  {RISnet, partial CSI, w. coupling},
  {RISnet, full CSI, w. coupling},
  {RISnet, partial CSI, w/o coupling},
  {RISnet, full CSI, w/o coupling},
}
]
\draw[draw=none,fill=steelblue31119180] (axis cs:0,-0.4) rectangle (axis cs:2.25,0.4);  %
\draw[draw=none,fill=steelblue31119180] (axis cs:0,0.6) rectangle (axis cs:0.33,1.4);  %
\draw[draw=none,fill=steelblue31119180] (axis cs:0,1.6) rectangle (axis cs:0.88,2.4);  %
\draw[draw=none,fill=steelblue31119180] (axis cs:0,2.6) rectangle (axis cs:1.03,3.4);  %
\draw[draw=none,fill=steelblue31119180] (axis cs:0,3.6) rectangle (axis cs:2.6,4.4);  %
\draw[draw=none,fill=steelblue31119180] (axis cs:0,4.6) rectangle (axis cs:1.01,5.4);  %
\draw[draw=none,fill=steelblue31119180] (axis cs:0,5.6) rectangle (axis cs:2.70,6.4);  %
\end{axis}

\end{tikzpicture}}
    \caption{Comparison between proposed approach with baselines.
    The proposed approach outperforms all baselines with the exception of partial \gls{csi} and \gls{iid} channel gain.}
    \label{fig:comparison_baselines}
\end{figure}
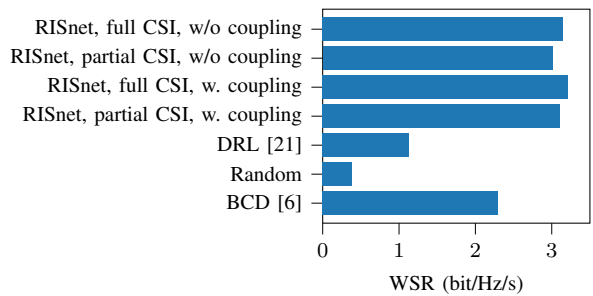
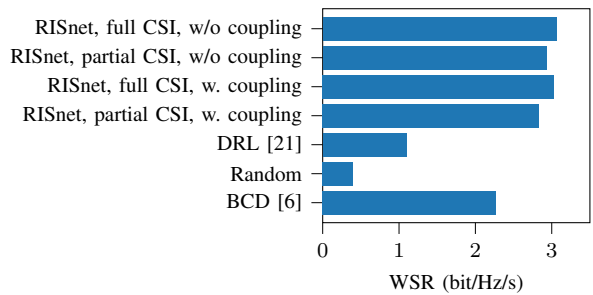
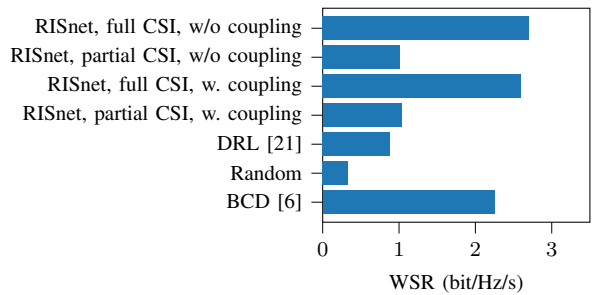
We have the following observations:
\begin{itemize}
    \item The proposed RISnet outrforms baseline algorithms \gls{drl}, 
    random phase shift
    and \gls{bcd} algorithm significantly.
    The only exception is RISnet with \gls{iid} channel gain and partial \gls{csi},
    as explained below.
    A main reason for the better performance of the proposed approach is
    that \gls{drl} with a fully connected \gls{nn}
    and \gls{bcd} algorithm have a 
    poor scalability with the number of \gls{ris} elements.
    \item The proposed RISnet with partial \gls{csi} works well with deterministic ray-tracing channel model (\autoref{fig:baselines_deteministic}),
    and deterministic channel model plus \gls{iid} channel gain
    (\autoref{fig:baselines_semi}).
    The observation holds for both setups with and without mutual coupling.
    This is because the partial \gls{csi} contains sufficient information to recover the full \gls{csi}.
    On the contrary,
    the full \gls{csi} cannot be recovered from the partial \gls{csi} with \gls{iid} channel gain.
    Therefore, the performance with partial \gls{csi} and \gls{iid} channel gain is poor.
\end{itemize}

\subsection{Necessity to Consider Mutual Coupling}

In this section,
we demonstrate the necessity to consider mutual coupling
by testing RISnet trained \emph{without mutual coupling} to problem \emph{with mutual coupling}. 
As shown in \autoref{fig:comparison_mutual_coupling},
the model mismatch
(i.e., model trained without mutual coupling and tested with mutual coupling)
results in a significant degradation particularly for RISnet with partial \gls{csi},
which justifies the necessity of explicit consideration of mutual coupling
if it exists in the \gls{ris} hardware.

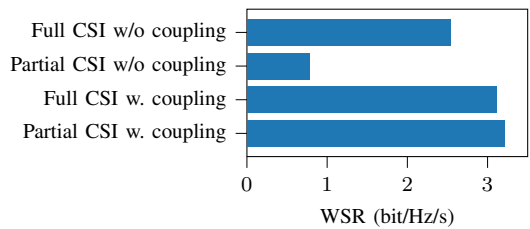
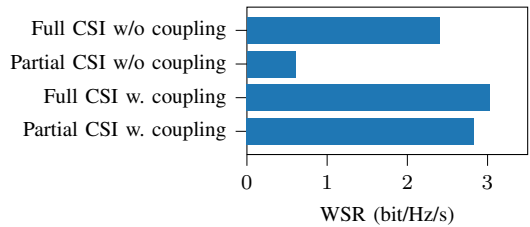
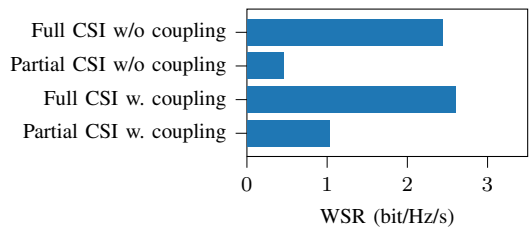
\begin{figure}[htbp]
    \centering
    \subfigure[Deterministic ray-tracing channels\label{fig:baselines_deteministic_2}]{
    \begin{tikzpicture}

\definecolor{darkgray176}{RGB}{176,176,176}
\definecolor{steelblue31119180}{RGB}{31,119,180}
\tikzstyle{every node}=[font=\footnotesize]

\begin{axis}[
height=.4\linewidth,
width=.6\linewidth,
tick align=outside,
tick pos=left,
x grid style={darkgray176},
xlabel={WSR (bit/Hz/s)},
xmin=0, xmax=3.5,
xtick style={color=black},
y grid style={darkgray176},
ymin=-0.69, ymax=3.69,
ytick style={color=black},
ytick={0,1,2,3},
yticklabels={
  {Partial CSI w. coupling},
  {Full CSI w. coupling},
  {Partial CSI w/o coupling},
  {Full CSI w/o coupling},
}
]
\draw[draw=none,fill=steelblue31119180] (axis cs:0,-0.4) rectangle (axis cs:3.21,0.4);  %
\draw[draw=none,fill=steelblue31119180] (axis cs:0,0.6) rectangle (axis cs:3.11,1.4);  %
\draw[draw=none,fill=steelblue31119180] (axis cs:0,1.6) rectangle (axis cs:0.78,2.4);  %
\draw[draw=none,fill=steelblue31119180] (axis cs:0,2.6) rectangle (axis cs:2.54,3.4);  %
\end{axis}

\end{tikzpicture}}
    \subfigure[Deterministic ray-tracing channels plus i.i.d. scattering gain\label{fig:baselines_semi_2}]{
    \begin{tikzpicture}

\definecolor{darkgray176}{RGB}{176,176,176}
\definecolor{steelblue31119180}{RGB}{31,119,180}
\tikzstyle{every node}=[font=\footnotesize]

\begin{axis}[
height=.4\linewidth,
width=.6\linewidth,
tick align=outside,
tick pos=left,
x grid style={darkgray176},
xlabel={WSR (bit/Hz/s)},
xmin=0, xmax=3.5,
xtick style={color=black},
y grid style={darkgray176},
ymin=-0.69, ymax=3.69,
ytick style={color=black},
ytick={0,1,2,3},
yticklabels={
  {Partial CSI w. coupling},
  {Full CSI w. coupling},
  {Partial CSI w/o coupling},
  {Full CSI w/o coupling},
}
]
\draw[draw=none,fill=steelblue31119180] (axis cs:0,-0.4) rectangle (axis cs:2.83,0.4);  %
\draw[draw=none,fill=steelblue31119180] (axis cs:0,0.6) rectangle (axis cs:3.03,1.4);  %
\draw[draw=none,fill=steelblue31119180] (axis cs:0,1.6) rectangle (axis cs:0.61,2.4);  %
\draw[draw=none,fill=steelblue31119180] (axis cs:0,2.6) rectangle (axis cs:2.4,3.4);  %
\end{axis}

\end{tikzpicture}}
    \subfigure[I.i.d. channel gain\label{fig:baselines_iid_2}]{
    \begin{tikzpicture}

\definecolor{darkgray176}{RGB}{176,176,176}
\definecolor{steelblue31119180}{RGB}{31,119,180}
\tikzstyle{every node}=[font=\footnotesize]

\begin{axis}[
height=.4\linewidth,
width=.6\linewidth,
tick align=outside,
tick pos=left,
x grid style={darkgray176},
xlabel={WSR (bit/Hz/s)},
xmin=0, xmax=3.5,
xtick style={color=black},
y grid style={darkgray176},
ymin=-0.69, ymax=3.69,
ytick style={color=black},
ytick={0,1,2,3},
yticklabels={
  {Partial CSI w. coupling},
  {Full CSI w. coupling},
  {Partial CSI w/o coupling},
  {Full CSI w/o coupling},
}
]
\draw[draw=none,fill=steelblue31119180] (axis cs:0,-0.4) rectangle (axis cs:1.03,0.4);  %
\draw[draw=none,fill=steelblue31119180] (axis cs:0,0.6) rectangle (axis cs:2.6,1.4);  %
\draw[draw=none,fill=steelblue31119180] (axis cs:0,1.6) rectangle (axis cs:0.46,2.4);  %
\draw[draw=none,fill=steelblue31119180] (axis cs:0,2.6) rectangle (axis cs:2.44,3.4);  %
\end{axis}

\end{tikzpicture}}
    \caption{Comparison between testing results considering mutual coupling
    of trained models without (first two rows) and with (last two rows) consideration of mutual coupling.
    This result justifies the necessity to consider mutual coupling if it exists
    because the model mismatch results in significant performance loss.}
    \label{fig:comparison_mutual_coupling}
\end{figure}

\subsection{Performance with Different Numbers of Anchor Elements}

In this section,
we investigate the impact of the anchor element number
on the performance.
We consider a \gls{ris} with 16 anchor elements with the ability of channel estimation,
as depicted in \autoref{fig:expansion},
and a \gls{ris} with 4 anchor elements.
We expect that the \gls{ris} with 4 anchor elements is more vulnerable to \gls{iid} scattering gain
due to the less input information.
The realized \gls{wsr} is
shown in \autoref{fig:comparison_anchor}.
The two \glspl{ris} are roughly equally good with deterministic channels
and roughly equally bad with \gls{iid} channel gain.
However,
the \gls{ris} with 16 anchor elements 
(more partial \gls{csi})
performs much better than the \gls{ris} with 4 anchor elements
(less partial \gls{csi}),
as we expect.
This result suggests that the required number of anchor elements depends on the channel property:
The more \gls{iid} scattering gain,
the more required anchor elements
and more expensive hardware.
The optimal choice of the hardware depends on the propagation environment.

\begin{figure}[htbp]
    \centering
    \subfigure[Deterministic ray-tracing channels\label{fig:anchor_deteministic}]{
    \begin{tikzpicture}

\definecolor{darkgray176}{RGB}{176,176,176}
\definecolor{steelblue31119180}{RGB}{31,119,180}
\tikzstyle{every node}=[font=\footnotesize]

\begin{axis}[
height=.3\linewidth,
width=.6\linewidth,
tick align=outside,
tick pos=left,
x grid style={darkgray176},
xlabel={WSR (bit/Hz/s)},
xmin=0, xmax=3.5,
xtick style={color=black},
y grid style={darkgray176},
ymin=-0.69, ymax=1.69,
ytick style={color=black},
ytick={0,1,2,3},
yticklabels={
  {16 ($4\times 4$) anchor elements},
  {4 ($2\times 2$) anchor elements},
}
]
\draw[draw=none,fill=steelblue31119180] (axis cs:0,-0.4) rectangle (axis cs:3.21,0.4);  %
\draw[draw=none,fill=steelblue31119180] (axis cs:0,0.6) rectangle (axis cs:3.05, 1.4);  %
\end{axis}

\end{tikzpicture}}
    \subfigure[Deterministic ray-tracing channels plus i.i.d. scattering gain\label{fig:anchor_semi}]{
    \begin{tikzpicture}

\definecolor{darkgray176}{RGB}{176,176,176}
\definecolor{steelblue31119180}{RGB}{31,119,180}
\tikzstyle{every node}=[font=\footnotesize]

\begin{axis}[
height=.3\linewidth,
width=.6\linewidth,
tick align=outside,
tick pos=left,
x grid style={darkgray176},
xlabel={WSR (bit/Hz/s)},
xmin=0, xmax=3.5,
xtick style={color=black},
y grid style={darkgray176},
ymin=-0.69, ymax=1.69,
ytick style={color=black},
ytick={0,1,2,3},
yticklabels={
  {16 ($4\times 4$) anchor elements},
  {4 ($2\times 2$) anchor elements},
}
]
\draw[draw=none,fill=steelblue31119180] (axis cs:0,-0.4) rectangle (axis cs:2.83,0.4);  %
\draw[draw=none,fill=steelblue31119180] (axis cs:0,0.6) rectangle (axis cs:1.96, 1.4);  %
\end{axis}

\end{tikzpicture}}
    \subfigure[I.i.d. channel gain\label{fig:anchor_iid}]{
    \begin{tikzpicture}

\definecolor{darkgray176}{RGB}{176,176,176}
\definecolor{steelblue31119180}{RGB}{31,119,180}
\tikzstyle{every node}=[font=\footnotesize]

\begin{axis}[
height=.3\linewidth,
width=.6\linewidth,
tick align=outside,
tick pos=left,
x grid style={darkgray176},
xlabel={WSR (bit/Hz/s)},
xmin=0, xmax=3.5,
xtick style={color=black},
y grid style={darkgray176},
ymin=-0.69, ymax=1.69,
ytick style={color=black},
ytick={0,1,2,3},
yticklabels={
  {16 ($4\times 4$) anchor elements},
  {4 ($2\times 2$) anchor elements},
}
]
\draw[draw=none,fill=steelblue31119180] (axis cs:0,-0.4) rectangle (axis cs:1.03,0.4);  %
\draw[draw=none,fill=steelblue31119180] (axis cs:0,0.6) rectangle (axis cs:1.08, 1.4);  %
\end{axis}

\end{tikzpicture}}
    \caption{Comparison between training results with different numbers of anchor elements.
    More anchor elements have higher hardware complexity but are more robust to \gls{iid} channel gain.}
    \label{fig:comparison_anchor}
\end{figure}
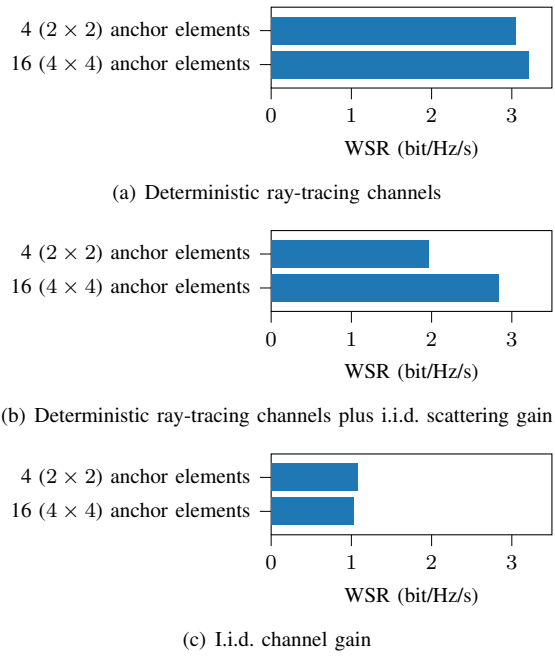

\subsection{Complexity Analysis}
In addition to performance,
another major advantage of RISnet is the low complexity in real-time application.
According to~\eqref{eq:tensor_operation2} and~\eqref{eq:tensor_operation4},
the information processing of a layer is considered as cascading operation of 
tensor multiplication and addition of complexity~${\mathcal{O}(QPKN) + \mathcal{O}(QKN)}$,
where $P$ is the input feature dimension and 
$Q$ is the output feature dimension,
the ReLU activation of complexity~$\mathcal{O}(QKN)$
(because the activation function is performed element-wise),
and the averaging operation of complexity~$\mathcal{O}(N)$, $\mathcal{O}(U)$ and $\mathcal{O}(NU)$
for class~\texttt{ca}, \texttt{oc} and \texttt{oa},
respectively.
Asymptotically,
the complexity of RISnet inference is $\mathcal{O}(QPKN)$.
On the other hand,
the asymptotical complexity of the \gls{bcd} algorithm is
$\mathcal{O}(U^2N^2)$~\cite{guo2020weighted}.
The asymptotical complexity of a fully connected \gls{nn} applied in~\cite{huang2020reconfigurable} is $\mathcal{O}(UN^2)$
since the input dimension is proportional to $UN$ 
(number of users times number of \gls{ris} elements)
and the output dimension is $N$.
We can see clearly that the complexity of RISnet inference grows
linearly with the number of \gls{ris} elements~$N$,
while the complexities of baseline approaches~\cite{guo2020weighted,huang2020reconfigurable} grow quadratically with $N$.
This observation confirms the high scalability of the proposed RISnet.

Practically,
RISnet can configure a \gls{ris} with 1296~elements within 0.07~second on an off-the-shelf laptop with Apple~M3~Pro processor,
while reference~\cite{guo2020weighted} needs 253~seconds
and reference~\cite{huang2020reconfigurable} needs 0.23-second.
The reason of the fast inference is that the most computation complexity is in training,
which takes about 8~hours.
Once the training is finished,
the inference (application) with a trained RISnet is very efficient.
Compared to a fully connected \gls{nn},
where the number of trainable parameters is proportional to the product
of input and output dimensions,
the number of trainable parameters of RISnet is independent from the input and output dimensions.
This comparison shows the advantage of complexity and real-time performance of the proposed RISnet.
In future applications,
we can leverage the hardware developed for heavy \gls{ml} applications such as \glspl{llm} and autonomous driving
in a much smaller scale,
because the proposed RISnet has about $10^4$ trainable parameters,
which is one million times less than current \glspl{llm}~\cite{chang2024survey},
making an efficient real-time application feasible with low-cost hardware.

\section{Conclusion}
\label{sec:conclusion}

The \gls{sdma} technique is crucial in multi-user wireless communication system.
Its performance depends strongly on the channel property,
which can be improved by the \gls{ris}.
In the previous research,
scalability of \gls{ris} elements,
unrealistic assumption of full \gls{csi}
and ignorance of mutual coupling between adjacent \gls{ris} elements
are the main limitations of realizing \gls{ris} in reality.
In this work, we have focused on deriving scalable solutions with unsupervised \gls{ml} for \gls{ris} configuration while making realistic assumptions regarding \gls{csi} knowledge and mutual coupling between adjacent \gls{ris} elements. 
We integrate domain-knowledge in communication and \gls{ml} techniques to design a problem-specific \gls{nn} architecture \emph{RISnet},
which is permutation-invariant to user input.
We further showed that partial \gls{csi} is sufficient to achieve a similar performance to full \gls{csi} if the channel comprises of a few specular propagation paths 
(i.e., the channel gain is not dominated by \gls{iid} components due to scattering).
Finally, we demonstrated that the proposed approach outperforms the baselines significantly and
it is necessary to explicitly consider the mutual coupling.
Beyond this work,
problem-specific \gls{ml} combining domain knowledge and \gls{ml} techniques provides unique opportunity to improve performance, complexity and equivariance compared to generic \gls{ml}.
This work can be extended by considering \gls{rsma} and beyond-diagonal \gls{ris}.

Code and data in this paper are available under \url{https://github.com/bilepeng/risnet_mutual_partial}.
\appendices
\section{Proof of \autoref{theorem:permutation_invariance}}
\label{appendix}

We first prove that every layer except the last layer is permutation-equivariant,
i.e., if
$\mathbf{F}_i'=\mathbf{P}\cdot\mathbf{F}_i$,
where $\mathbf{P}$ is an arbitrary permutation matrix and the multiplication is between $\mathbf{P}$ and the last two dimensions of $\mathbf{F}_i$,
$\mathbf{F}_{i+1}=M(\mathbf{F}_i)$ and 
$\mathbf{F}_{i+1}'=M(\mathbf{F}_i')$,
where $M$ is a layer,
we have $\mathbf{F}_{i+1}' = \mathbf{P}\cdot\mathbf{F}_{i+1}$.

We first consider class \texttt{cc},
the output of information processing unit \texttt{cc} of layer~$i$ given the permuted input is %
\begin{align}
    \mathbf{F}'^{\texttt{cc}}_{i+1} &= \text{ReLU}(\mathbf{W}^{\texttt{cc}}_{i} \mathbf{P} \cdot \mathbf{F}_{i} + \mathbf{b}_i^{\texttt{cc}})\\
    &= \text{ReLU}(\mathbf{P} \cdot \mathbf{W}^{\texttt{cc}}_{i} \mathbf{F}_{i} + \mathbf{b}_i^{\texttt{cc}})\\
    &= \mathbf{P} \cdot\text{ReLU}( \mathbf{W}^{\texttt{cc}}_{i} \mathbf{F}_{i} + \mathbf{b}_i^{\texttt{cc}})\\
    &= \mathbf{P} \mathbf{F}_{i+1}^{\text{\texttt{cc}}}.
    \label{eq:equivariance_cc}
\end{align}
The second line holds because the multiplication with $\mathbf{W}_i^{\texttt{cc}}$ is in the first dimension of $\mathbf{P}\cdot \mathbf{F}_i$,
whereas the multiplication with $\mathbf{P}$ is in the second and the third dimensions of $\mathbf{F}_i$.
The third line holds because ReLU is an elementwise operation.
The fourth line is the definition of $\mathbf{F}_{i+1}^{\text{\texttt{cc}}}$.
Similarly, we can prove $\mathbf{F}_{i+1}'^{\text{\texttt{ca}}} = \mathbf{P} \cdot \mathbf{F}_{i+1}^{\text{\texttt{ca}}}$.

To prove $\mathbf{F}_{i+1}'^{\text{\texttt{oc}}} = \mathbf{P} \cdot \mathbf{F}_{i+1}^{\text{\texttt{oc}}}$,
we first need to prove the multiplication between $\mathbf{E}^{U\times U}$ and $\mathbf{P}$ is commutative:
\begin{align}
        \mathbf{E}^{U\times U}\mathbf{P}
        &= \mathbf{1}^{U\times U}\mathbf{P} - \mathbf{I}^{U\times U} \mathbf{P}\\
        &= \mathbf{P}\mathbf{1}^{U\times U} - \mathbf{P}\mathbf{I}^{U\times U} \\
        &= \mathbf{P} \mathbf{E}^{U\times U}.
    \label{eq:commutative}
\end{align}
The third line holds because the sum of every row/column of $\mathbf{P}$ is 1.
We can now prove the permutation-equivariance as
\begin{align*}
    \mathbf{F}'^{\texttt{oc}}_{i+1} &= \mathbf{E}^{U \times U} \cdot \text{ReLU}(\mathbf{W}^{\texttt{oc}}_{i} 
    \mathbf{P}\cdot
    \mathbf{F}_{i} + \mathbf{b}_i^{\texttt{oc}}) \big/ (U-1)\\
    &= \mathbf{E}^{U \times U} \cdot \text{ReLU}(
    \mathbf{P}\cdot
    \mathbf{W}^{\texttt{oc}}_{i} 
    \mathbf{F}_{i} + \mathbf{b}_i^{\texttt{oc}}) \big/ (U-1)\\
    &= \mathbf{E}^{U \times U}
    \mathbf{P}\cdot
    \text{ReLU}(
    \mathbf{W}^{\texttt{oc}}_{i} 
    \mathbf{F}_{i} + \mathbf{b}_i^{\texttt{oc}}) \big/ (U-1)\\
    &= \mathbf{P}\cdot\mathbf{E}^{U \times U}
    \text{ReLU}(
    \mathbf{W}^{\texttt{oc}}_{i} 
    \mathbf{F}_{i} + \mathbf{b}_i^{\texttt{oc}}) \big/ (U-1)\\
    &= \mathbf{P}\cdot \mathbf{F}_{i+1}^{\text{\texttt{oc}}}.
    \label{eq:equivariance_oc}
\end{align*}
Similarly, we can prove $\mathbf{F}_{i+1}'^{\text{\texttt{oa}}} = \mathbf{P} \cdot \mathbf{F}_{i+1}^{\text{\texttt{oa}}}$.

Since the output of every class in every layer is permutation-equivariant,
the whole output after summation over users is permutation-invariant.

\printbibliography[heading=bibintoc]

\end{document}